\renewenvironment{proof}{\paragraph{Proof} }{\hfill\qed}
\renewcommand{\paragraph}[1]{\noindent\textit{#1}.}
\newcommand{\BigO}[1]{\ensuremath{\operatorname{\mathcal{O}}\bigl(#1\bigr)}}
\begin{document}



%
%

\title{Optimization of Tree Modes for Parallel Hash Functions: A Case Study}

\author{Kevin Atighehchi\inst{1} \and Robert Rolland\inst{2}}

\institute{Aix Marseille Univ, CNRS, LIF, Marseille, France\\
\email{kevin.atighehchi@univ-amu.fr}\\
\and
Aix Marseille Univ, CNRS, I2M, Marseille, France\\
\email{robert.rolland@acrypta.fr}}

\maketitle


\begin{abstract} 
This paper focuses on parallel hash functions based on tree modes of operation for an inner Variable-Input-Length function. 
This inner function can be 
either a single-block-length (SBL) and \emph{prefix-free} MD hash function, or a sponge-based hash function. 
We discuss the various forms of optimality that 
can be obtained
when designing parallel hash functions based on trees where all leaves have the same depth.
The first result is a scheme which optimizes 
the tree topology in order to decrease the running time. 
Then, without affecting the optimal running time 
we show that we can slightly change the corresponding tree topology so as 
to minimize the number of required processors as well.
Consequently, the resulting scheme decreases in the first place the running 
time and in the second place the number of required processors. 
\end{abstract}

\smallskip
\noindent \textbf{Keywords.} Hash functions, Hash tree, Merkle tree, Parallel algorithms


\section{Introduction}\label{sec:intro}

A mode of operation for hashing is an algorithm iterating (operating)
an underlying function over parts of a message, under a particular composition method, 
in order to compute a digest. 
A hash function is obtained by applying such a mode to a concrete underlying function;
we then
say that the former is constructed on top of the latter.
Usually, when the purpose is to process messages of arbitrary length, the underlying function may be a fixed-input-length (FIL) compression function, 
a block cipher or a permutation. 
However, there may also be an interest in using a sequential (or serial) hash function as underlying function, in order to add to it other features, 
like coarse-grained parallelism.
The resulting hash function must satisfy the usual properties of pre-image resistance (given a digest value, it is hard to find 
any pre-image producing this digest value),
second pre-image resistance (given a message $m_1$, it is hard to find a second message $m_2$ which produces the same digest value),
and collision resistance (it is hard to find two distinct messages which produce the same digest value).
A sequential hash function can only use 
Instruction-Level Parallelism (ILP) and SIMD instructions \cite{GK12a,GK12b}.
A cryptographic hash function has numerous applications, the 
main one is
its use
in a signature algorithm to compress a message before signing it.

The most well known sequential hashing mode is the Merkle-Damgård \cite{Dam90,Mer79} construction
which can only take advantage of the fine-grained parallelism of the operated compression function.
If such a low-level "primitive" can benefit from the Instruction-Level Parallelism, by using also
SIMD instructions, the outer algorithm iterating this building block could benefit from a coarse-grained 
parallelism. This parallelism can be employed in multithreaded implementations.
Suppose that we have a collision-free compression function taking as input
a fixed-size data, $f : \{0,1\}^{2N} \rightarrow \{0,1\}^N$. By using a balanced binary tree structure, 
Merkle and Damgård \cite{Dam90,Mer80} show that we can extend the domain 
of this function so that the new outer function, denoted
$H : \{0,1\}^* \rightarrow \{0,1\}^N$, has an arbitrary sized domain and is still collision-resistant.
Note that if the function $f$ is a sequential hash function, the purpose of this tree structure is merely the addition 
of coarse-grained parallelism.

A construction using a balanced binary tree allows simultaneous processing of multiple parts of data 
at a same level of the tree, 
reducing the running time to hash the message from $\BigO{n}$ to $\BigO{\log n}$ 
if we have $\BigO{n}$ processors \cite{Dam90,Mer80}. If we want to further reduce
the amount of resources involved, we can use one of the following rescheduling techniques:
\begin{itemize}
 \item Each processor is assigned the processing of a subtree (in the data structure sense) having $\log n$ leaves. 
 There are approximatively $n/\log n$ such subtrees. 
The processing of the remaining ancestor nodes, at each remaining level of the tree, is distributed as fairly as possible
between the processors. An example is depicted in Figure \ref{resch_technique}.

 \item An alternative solution is, at each level of the tree, to distribute as fairly as possible the node computations among
$\BigO{n/\log n}$ processors.
\end{itemize}

\begin{figure}[htbp]
\begin{center}
\includegraphics{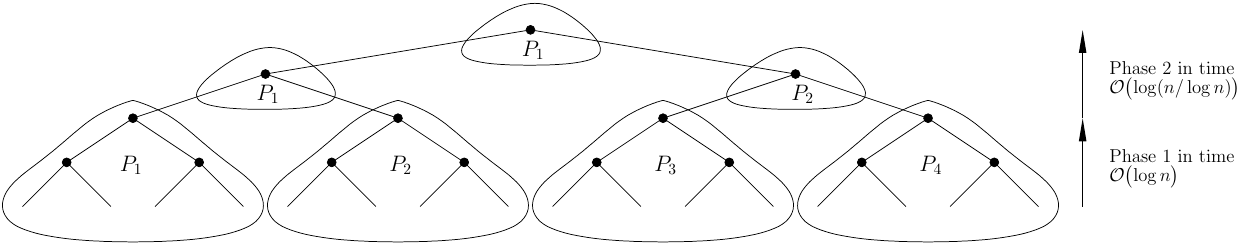}
\caption{Example of the computation of the root node in $\BigO{\log n}$ time using $\BigO{n/\log n}$ processors. The message
to hash is of size $n=16$. In Phase 1, the computation of each hash subtree containing $4=\log_2 16$ leaves is assigned to each processor. The first 
subtree is assigned to processor $P_1$, the second one to processor $P_2$, the third one to processor $P_3$ and the last one to processor $P_4$. A fine-grained allocation
is then performed in Phase~2.}
\label{resch_technique}
\end{center}
\end{figure}

The number of processors is then reduced by a factor $\log n$ and
the asymptotic running time is conserved (with, nevertheless, a multiplicative factor 2).
In this paper we are not interested in tradeoffs between
the amount of used resources and the running time but instead we study optimal algorithms in finite distance.
More precisely, we determine the hash tree structures which give the best concrete (parallel) time complexity for finite message lengths.

A tree structure is notably used in parallel hashing modes of Skein \cite{FLSWBKCW09}, BLAKE2~\cite{ANWW13} or MD6 \cite{RABCDEKKLRSSSTY08}. 
To give some examples, Skein uses a tree whose topology is controlled by the user thanks to three parameters:
the arity of base level nodes which is a power of two; the arity of other inner nodes,
which is also a power of two, and a last parameter limiting the height of the tree.
MD6 uses a full (but not necessarily perfect) quaternary tree, in the sense that an inner node has always four children.
Some fictive leaves or nodes padded with 0 are added so that a rightmost node has the correct number of children. 
Like Skein, MD6 offers a parameter which serves to limit the height of the tree.

Some proposals \cite{SS01,SS02,PS03} consider that a tree covering all the message blocks
is not a good thing, because the number of processors should not grow with the size of the message.
For instance, the domain extension parallel algorithm from Sarkar et al. \cite{SS01,SS02,PS03} uses a perfect binary tree
of processors, of fixed size. 
This perfect binary tree of compression/hash function calls can be seen as a big compression function, sequentially iterated
over large parts of the message. 
In other words only the nodes computations performed in the tree can be done in parallel.
The number of usable processors is a system parameter chosen by the issuer of the cryptographic form when hashing the message.
The value of this parameter has to be reused by the recipients, for instance when verifying a signature. 
Thus, this one limits the scalability and the potential speedup. 
In this paper we consider that the scalability and the potential speedup 
should be independent of the characteristics (configuration) of the transmitting computer.

Bertoni \textit{et al.} \cite{BDPV09,BDPV14_Sak} give sufficient conditions for a tree-based hash function to ensure its indifferentiability
from a random oracle. They propose several tree hashing modes for different usages.
For example we can make use of a tree of height 2, defined in the following way: we divide the message in as many parts (of roughly equal size) 
as there are processors so that each processor hashes each part, 
and then the concatenation of all the results is sequentially hashed by one processor. 
To divide the message in parts of roughly equal size, the algorithm needs to know in advance the size of the message.
Bertoni \textit{et al.} propose also a variant
which still makes use of a tree having two levels and a fixed number of processors, but this one interleaves 
the blocks of the message. This interleaving offers a number of advantages, as it allows an efficient parallel hashing of a streamed message, 
a fairly equal distribution of the data processed by each processor in the first level of tree (without prior knowledge of the message size), 
and a correct alignment of the data in the processors' registers. This kind of solution is suitable for multithreaded and SIMD implementations \cite{Gue14}.
In this paper we study theoretically optimal speedups, and, 
as a consequence, the message to hash is supposed to be already available.


Our concern in this paper is with hash tree modes using an underlying variable-input-length (VIL)
function that needs $l$ invocations of a lower level primitive to process a message of $l$ blocks, 
where a block and the hash output have the same size.
To make such a complexity concrete,
we choose to use a single-block-length (SBL) hash function as underlying function 
and to focus on the \emph{prefix-free} Merkle-Damg{\aa}rd construction from Coron \textit{et al.} \cite{CDMP05}. We make this choice for two reasons: 
first, we need a hash function whose mode of operation is proven indifferentiable from a random oracle (when its underlying primitive is assumed to be ideal).
Second, assuming that we have applied the \emph{prefix-free} encoding \cite{CDMP05} 
and another encoding \cite{BDPV09,BDPV14_Sak} to identify the type of input in the tree, it is possible to precompute a constant number of hash states,
making the aforementioned complexity possible.
Note that, even if we take this construction as an example, the possible use of a sponge-based function 
will be discussed.
In this work, we aim to show that we can improve
the performance of a hash tree mode of operation by reworking the tree-structured circuit topology.
While we focus on the case of trees having all their leaves at the same depth,
we are interested in minimizing the depth (parallel time)
of the circuit and the width (number of processors involved). This kind of work has been done for parallel 
exponentiation in finite fields \cite{Sti90,Gat91,AMV88a,LKPC05,WLLC06} where the multiplication operator is both associative and commutative.
In the case of parallel hashing, the considered operator can be a FIL (Fixed-Input Length) compression function. This is quite different since 
we do not have these two properties and we need to cope with other problems (the space consumption of a padding rule, a length encoding, 
or other information bits).
To the best of our knowledge, it is the first time that the problem of optimizing hash trees is addressed. The main interest of this paper is 
the methodology provided.
The results can be presented as follows:
\begin{itemize}
 \item The first result is an algorithm which optimizes 
the tree topology in order to decrease the depth. We first show that a node arity greater than $5$ is not possible and then we prove
that we can construct such an optimal tree using exclusively levels of arity $2$ and $3$.
 \item Without affecting this optimal depth, 
we show that we can change the corresponding tree topology in order to decrease the width as much as possible. 
In particular, we show that for some message lengths $l$, the width can be decreased to $\lceil l/5 \rceil$.
 \item We also provide an algorithm which optimizes the number of processors 
 at each step of the hash computation. We prove that eleven tree topologies are possible.
 \item With the assumption that the message size is Pareto-distributed, we estimate the relative frequency of each tree topology 
 using the Monte Carlo method.
 \item Finally, we show that by using a SBL hash function as underlying function and by assuming a constant number of precomputed values, 
 these optimisations can be applied safely.
\end{itemize}

Suppose that the processing of one block of the message by the underlying function costs one unit of time. 
A binary tree is not necessarily the structure which gives the best running time. Figure \ref{Arbre_exemple} shows two different tree topologies for
hashing a 6-block message. 
The binary tree depicted in (\ref{fig:sub1}) gives a (parallel) running time of $6$ units while the rightmost one with a different arity 
at each level, depicted in (\ref{fig:sub2}), gives a running time of 5 units. Furthermore, one may note that for messages of length less that 5 blocks, 
the use of the topology (\ref{fig:sub1}) has no utility compared to a purely sequential mode (\textit{i.e.} a completely degenerated binary tree).

\begin{figure}[h]
\centering
\subfloat[Non optimal tree]
  {
  \centering
  \includegraphics[scale=0.43]{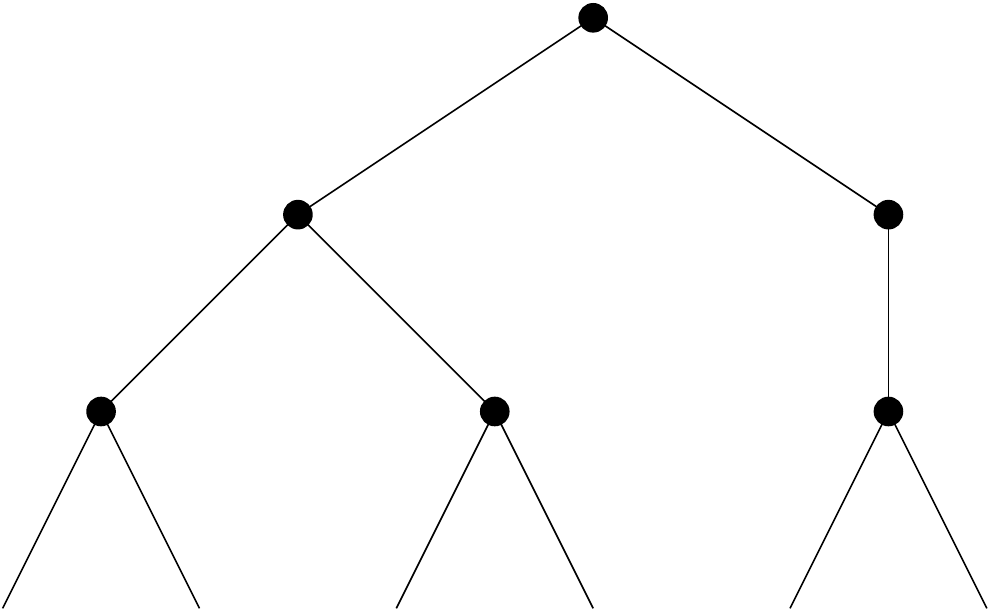}
  \label{fig:sub1}
  }
\qquad\qquad
\subfloat[Optimal tree]
  {
  \centering
  \includegraphics[scale=0.43]{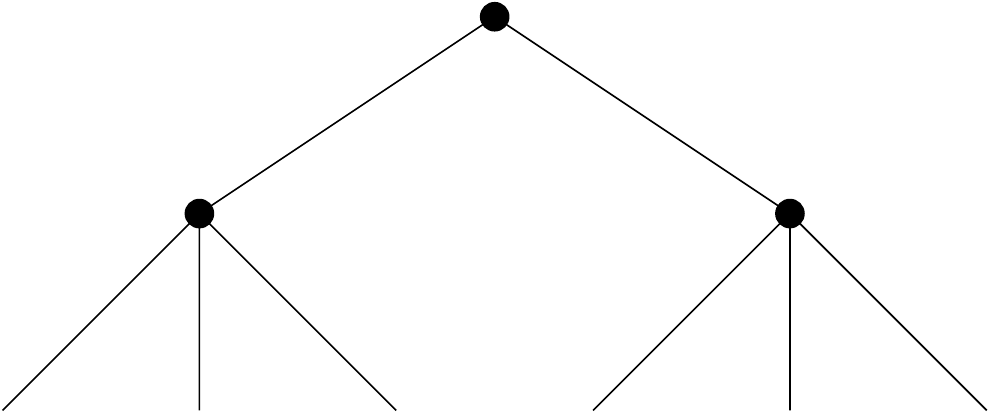}
  \label{fig:sub2}
  }
\caption{Tree hashing with a 6-block message. The hash tree on the left requires 2 units of time to process each level, while the one on the right requires 
3 units of time to process the base level and 2 units of time to process the root node.}
\label{Arbre_exemple}
\end{figure}


In what follows, we suppose the use of variable-input-length (VIL) compression functions (or hash functions)
having a domain space $\{0,1\}^{xN}$  for $x \geq 2$ and a fixed length range space $\{0,1\}^N$.
We also assume that such a function has an ideal computational cost of $x$ units when compressing $x$ blocks of size $N$ bits.
In other words, if we consider a tree of calls of
this function, the computation of a node having $k$ children (\emph{i.e.} $k$ blocks) has a cost of $k$ units.
Such a computational cost is realist. For instance, the UBI transformation function\footnote{UBI stands for Unique Block Iteration, the 
sequential operating mode used in Skein. The UBI transformation function refers to the application of this mode to the underlying tweakable
compression function, itself based on the tweakable block cipher Threefish.
}
used in the hash function family Skein \cite{FLSWBKCW09} performs $x$ calls to the tweakable block cipher Threefish
to compress a data of length $x$ blocks. Assuming a hash tree of height $h$ and $x_i$ the arity of level $i$ (for $i=1 \ldots h$), we define the parallel 
running time to obtain the root node value as being $\sum_{i=1}^h x_i$.

The paper is organized in the following way. In Section 2 we give background information and definitions.
In Section 3, we first describe the approach to minimize the running time of a hash function. Then, we give an algorithm to
construct a hash tree topology which achieves the same optimal running time while requiring an optimal number of processors.
We also show that we can optimize the number of processors at each step of the hash computation. This leads to eleven possible
tree topologies, whose probability distribution is empirically analyzed in Section \ref{prob_dist}.
We propose in Section 5
a concrete tree-based hash function that safely implements these optimizations.
Finally, in Section 5,  we conclude the paper. 

\section{Preliminaries}\label{subsec:backg}

\subsection{Tree structures}

Throughout this paper, 
we use the 
convention\footnote{This corresponds to the convention used to describe Merkle trees. 
} that a node is the result of
a function called on a data composed of the node's children.
A node value (or chaining value) then corresponds to an image by such a function and a child of this node can be either
an other image or a message block.
We call a base level node a node at level $1$ pointing to the leaves representing message data blocks. The leaves (or leaf nodes) 
are then at level~$0$. Then, a tree of nodes of height $h$ has $h+1$ levels.
We define the arity of a level in the tree as being the greatest node arity in this level.

A $k$-ary tree is a tree where the nodes are of arity at most $k$. For instance, a tree with only one node of arity $k$ is said to be a $k$-ary tree. 
A full $k$-ary tree is a tree where all nodes have exactly $k$ children.
A perfect $k$-ary tree is a full $k$-ary tree where all leaves have the same depth.

\begin{figure}[htb]
\centering
\subfloat[Ternary tree, or $(2,3)$-aries tree]
  {
  \centering
  \includegraphics[scale=0.50]{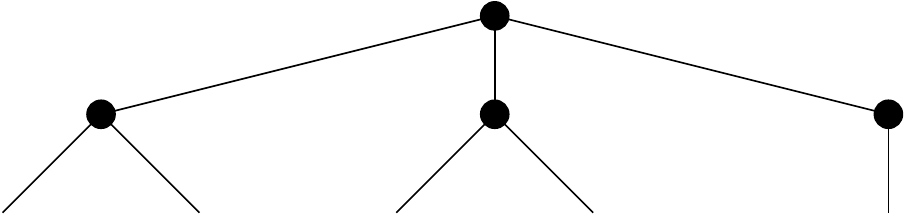}
  \label{perf1}
  }
\vspace{0.75cm}

\subfloat[Full $(2,3)$-aries tree]
  {
  \centering
  \includegraphics[scale=0.50]{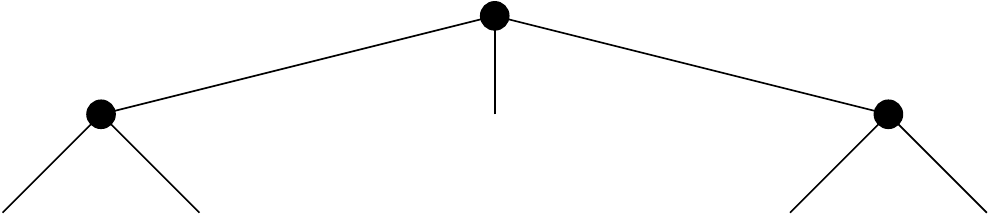}
  \label{perf1}
  }
\vspace{0.75cm}

\subfloat[Full and perfect ternary tree, also seen as a full and perfect $(3,3)$-aries tree]
  {
  \centering
  \includegraphics[scale=0.50]{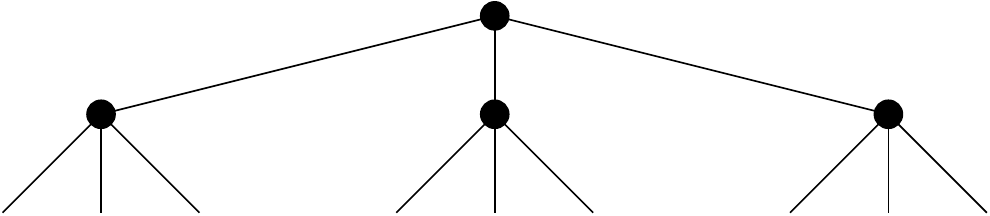}
  \label{perf2}
  }
\caption{Examples of trees}
\label{Tree_examples}
\end{figure}

\begin{sloppypar}
We also define other "refined" types of tree. 
We say that a tree is of arities $\{k_1,k_2, \ldots, k_n\}$ (we can call it a $\{k_1,k_2, \ldots, k_n\}$-aries tree) if 
it has $n$ levels (not counting level $0$) whose nodes at the first level are
of arity at most $k_1$, nodes at level $2$ are of arity at most $k_2$, and so on. We say that such a tree is full 
if all nodes at the first level have exactly $k_1$ children, all nodes at level $2$ have exactly $k_2$ children, and so on. 
As before, we say that such a tree is perfect if it is full and if all the leaves are at the same depth.
Some examples are depicted in Figure~\ref{Tree_examples}.
\end{sloppypar}

A tree of nodes has a corresponding tree of $f$-inputs having one less level. For instance, the simple tree of nodes of height 1 and arity 3 has 2 levels and 4 nodes: 
one root node and 3 children which are the message blocks. Its corresponding tree of $f$-inputs has a single level containing a single $f$-input. This $f$-input consists
of the concatenation of the message blocks and some meta-information bits. This representation is further defined in the following subsection \ref{three_conditions}.

\subsection{Security of tree-based hash functions}\label{three_conditions}

In this section (and in Subsection~\ref{sec_section}), we represent a hash tree as a tree of $f$-input,
assuming that a single inner function $f$ is operated in the outer hash function, denoted~$H$.
This is the representation adopted in~\cite{BDPV09,BDPV14_Suf} to prove the desired security properties.
We recall that a tree of $f$-inputs has one less level compared to a tree of nodes. 

A $f$-input is a finite 
sequence of bits from the following elements: message bits, chaining value bits (\emph{i.e.} bits coming from a $f$-image), and frame bits
(bits which are fully determined by the hash algorithm and the message size). In a tree of $f$-inputs, there are pointers from children to their corresponding parent. 
When a chaining value is present in a formatted $f$-input, it is
pointed by another $f$-input which is considered as its children. Each $f$-input has an associated index which locates it in the tree.
In addition, we need to define, for a tree $T$ of $f$-inputs, its corresponding \emph{tree template} $Z$ which has the same topology, where the corresponding $f$-inputs
have the same lengths and the frame bits match the corresponding bits in $T$, but where message and chaining value bits are not valuated. 
Thus, a tree template is fully determined by the message size 
and the parameters of the tree mode algorithm. This tree template
is used by the tree hash mode to instantiate the tree of $f$-inputs, by valuating progressively message bits and chaining value bits.

A tree $T$ of $f$-inputs is said to be \emph{compliant} with a tree hash mode $\tau$ if the latter can produce a tree of $f$-inputs whose corresponding tree template
is compatible with it (its topology, the frame bits and the sizes of its $f$-inputs match those of $T$). A tree $T$ of $f$-inputs is said to be 
\emph{final-subtree-compliant} with $\tau$ if the latter can produce a tree of $f$-inputs whose proper subtree (\textit{i.e.} contaning at least the root $f$-input) 
has a corresponding tree template with which $T$ is compatible.

Bertoni \textit{et al.}~\cite{BDPV09,BDPV14_Suf}
give some guidelines to design correctly a tree hash mode $\tau$ operating an inner hash (or compression) function $f$. 
They define three sufficient conditions which ensure that the constructed hash function $\tau_f$, which makes use of 
an ideal hash (or compression) function $f$,
is indifferentiable from an ideal hash function. Besides, they propose to use particular frame bits in order to meet these conditions.
We refer to \cite{BDPV09,BDPV14_Suf} for the detailed definitions, and we give here a short description for each of them:
\begin{itemize}
 \item \textit{message-completeness:} 
 suppose we have a tree of $f$-inputs produced by the tree hash mode. There is an algorithm $A_{message}$ which, among the bits in the tree, 
 uniquely determines the message. 
 This requires that each message bit is processed at least once by $f$. The message can be reconstructed correctly if, given the sequence of bits of a $f$-input, 
 we can identify those which are message bits, and we are able to say what their positions in the message are. 
 Generally, only the end of the message is problematic. To cope with this, dedicated frame bits can be used such as a reversible 
 padding\footnote{Since a hash function processes an entire number of blocks (whose size $N$ depends on the underlying primitive), 
 a reversible padding is an efficient way of revealing the end of the message. This consists in applying to the message $M$, whatever its length, 
 a function $pad$ which returns a bit-string of length a multiple of $N$.
 Such a padding has to be reversible, \emph{i.e.} there is a function $unpad$ such that $unpad(pad(M))=M$ for all messages $M$.
 A well-known technique consists in appending the bit "$1$" to the end of the message, followed by the minimum number of bits "$0$", so that the total bit-length of the 
 padded message is a multiple of $N$.} for the message or a coding of the message length.
 The running time of $A_{message}$ should be linear in the total number of bits in the tree.
 \item \emph{tree-decodability:} 
 intuitively, given a tree T of $f$-inputs generated by $\tau$, it is impossible to extract a final proper subtree T' of T 
 which could have legitimately been generated by $\tau$. 
 In other words, given such a subset of $f$-inputs, we are able to say whether there is a missing input or not. More formally,
 this property is satisfied by $\tau$ if there are no trees of $f$-inputs which are both \emph{compliant} and \emph{final-subtree-compliant} with it, and there is a decoding
 algorithm $A_{decode}$ that can parse the tree progressively on subtrees, starting from the
root $f$-input, to retrieve frame bits, chaining value bits and message bits unambiguously. Also, when terminating, $A_{decode}$
must decide if the tree is \emph{compliant}, \emph{final-subtree-compliant}, or \emph{incompliant} with~$\tau$. 
The running time of $A_{decode}$ should be linear in the total number of bits in the tree.
 \item \textit{final-$f$-input separability:}  
 whatever the tree $T$ of $f$-inputs generated by $\tau$, we can distinguish between a root $f$-input and any other $f$-input.
 Such a property is useful to prevent length extension attacks. 
 One straightforward way to fulfil this property is by means of domain separation between this final (root) $f$-input and other $f$-inputs, \emph{e.g.} 
 by augmenting them with a frame bit identifying them as such.
\end{itemize}

These conditions ensure that no weaknesses are introduced on top of
the risk of collisions in the inner function. For instance, with \emph{tree-decodability}, an inner
 collision in the tree is impossible without a collision for the inner function.
Andreeva \textit{et al.} have shown in 
\cite{AMP10,AMP10isc}
that a hash function indifferentiable from a random oracle
satisfies the usual security notions, up to a certain degree, such as pre-image and second pre-image resistance, 
collision resistance and multicollisions resistance.

\subsection{Definition of the inner hash function}

For our inner function, the hash functions based on the Merkle-Damg{\aa}rd construction, such as MD5, SHA-1 or SHA-2, have to be discarded. 
First, these functions cannot emulate a random oracle and we need this property to construct a tree-based hash function, 
constructed on top of it, which is still indifferentiable from a random oracle \cite{BDPV14_Suf}. Second, for efficiency purposes,
we want the inner function to have a running time linear in the number of blocks of the message. When MD-strengthened padding
is applied, the size of the message is appended to its end. This makes it difficult to obtain a running time perfectly linear in the number 
of blocks. As we will see, the \emph{prefix-free} Merkle-Damg{\aa}rd construction from Coron \emph{et al.} \cite{CDMP05} is a solution to both these problems.

Our inner hash function, based on the \emph{prefix-free} MD construction, iterates a SBL (Single-Block-Length) compression function, 
\emph{i.e.} whose output length is the same as the
message block length. We use the compression function
$c: \{0,1\}^{2N} \rightarrow \{0,1\}^N$ based on a 
$N$-bit block cipher with a $N$-bit key, such as the Davies-Meyer compression function: $c(x,y)=E_y(x) \oplus x$ where $x$ 
is the previous hash state and $y$ is a block of the message. Such a hash function \cite{CDMP05}, denoted $f$, consists in applying the 
plain MD construction to a prefix-free encoding of the message input.~\\

\noindent\fbox{%
\begin{varwidth}{\dimexpr\linewidth-2\fboxsep-2\fboxrule\relax}
\paragraph{{\boldmath\textbf{The considered inner function $f$}}}~\\
INPUT: message $m$.~\\
OUTPUT: hash value.
\begin{enumerate}
 \item The message $m$ is padded with $10^r$ where $r$ is the minimum number (possibly zero) of bits $0$ such that its bitlength 
 is a multiple of $N$.
 \item The $N$-bits encoding of the number of blocks is prepended to the message (prefix-free encoding step).
 \item The message is parsed into $k$ blocks $m_1$, $m_2$,..., $m_k$ of size $N$ bits.
 \item The plain MD mode is applied on these $k$ blocks.
 \begin{itemize}
  \item[] Let $y_0 = 0^N$ (or a fixed IV value).
  \item[] For $i=1$ to $k$ do $y_i \gets c(y_{i-1},m_i)$.
  \item[] Return $y_k \in \{0,1\}^{N}$.
 \end{itemize}
\end{enumerate}
\end{varwidth}
}~\\

At first sight, due to the padding and the prefix-free encoding, this hash function requires $j+2$ calls to the underlying compression function 
to process a message of $jN$~bits. In fact, the node arities of our tree topologies can be upper-bounded by a constant. Thus, the first block of
the prefix-free encoding of an input involved in our trees has a constant number of possible values, and their possible corresponding hash states 
$y_1$ can be precomputed. Assuming a constant number of precomputed hash states, the running time of this hash function is then reduced to $j+1$
calls to the underlying primitive. Hence, it is explained in the subsection above that before applying the function~$f$, domain
separation bits have to be added to the input. Bertoni \emph{et al.} \cite{BDPV14_Suf} have stated that 2 bits is sufficient. 
The number of distinct 
domain separation codes can then be considered small.
For domain separation purposes, we choose to prepend a code of $N-1$ bits to the message 
so that there is no extra bits $0$ due to the padding operation. With this second ``large'' encoding, the first bit of 
the message is at the end of the second block. Then, the running time is still $j+1$ when the possible values of
$y_1$ are precomputed. Since the number of distinct domain separation codes is small and that only one bit of the message is in the second block, 
we can precompute the possible values of $y_2$ (instead of $y_1$)
so as to reduce the running time of $f$ to $j$ units of time. Further details will be provided in Section \ref{sec_section2}.
~\\

A hash function that performs one call to a SBL compression function to process one block of the (padded) message is also called a SBL hash function.~\\

%

\paragraph{On the use of other SBL compression functions}
Chang \emph{et al.} \cite{CLNY06} have extended the work of Coron \emph{et al.} by checking if other 
hash functions using this prefix-free MD construction are still indifferentiable from an RO.
It turns out that sixteen of the PGV (Preneel Govaerts Vandewalle \cite{PGV93}) compression functions yield sound hash functions.
Again, these hash functions divide the message in blocks of the same size than the digest.~\\

\paragraph{On the use of a sponge-based hash function as inner function}
The function $f$ can be a sponge-based hash function~\cite{BerDaePeeVan08} if the message block and output sizes are equal.
This constraint can be fulfilled by setting the rate (also denoted $N$) of this sponge function to the output size. Let us suppose that we use
a function having the same padding rule than Keccak~\cite{BerDaePeeVan13}. This padding rule consists in adding to the end of the message 
the bitstring $10^*1$ where $0^*$ is
the minimum number of bits $0$ such that the padded message has a size a multiple of $N$. 
Hence, we choose to use $N-2$ bits for the prepended domain separation codes so that the padded message ends with the two bits~$11$.
With this large encoding, the first two bits of the message are at the end of the first block. Taking into account these overheads, the running time of
$f$ is $j+1$ to process a message of $j$ blocks.
Some precomputations can be done to reduce the running time of $f$ to $j$ units of time. 
Further details will be provided in Section \ref{sec_section2}.~\\

\paragraph{About the padding overhead in other designs} The UBI transformation function of Skein~\cite{FLSWBKCW09} is collision resistant and 
requires exactly $j$ calls to the tweakable
block-cipher Threefish to process a message of $j$ blocks. Indeed, when the message size is already a multiple of $N$, a flag in the tweak indicates whether
or not the message is padded. Such a design choice is also present in the BLAKE2~\cite{ANWW13} hash function.

\subsection{Parallel computation model}

We make the assumption that the number of processors is equal to the number of nodes at the first level of the tree. Once the nodes have been computed 
at a given level, the processors are reused to process the next (upper) level.

We use the classic PRAM (Parallel Random Access Machine) model of computation~\cite{GiRy88},
assuming the strategy EREW (Exclusive Read Exclusive Write). When dealing with hash trees, this model can indeed 
be restricted to this strategy: we do not need that two processors write simultaneously into the same memory location, nor that a same data
can be read simultaneously by two or more processors. In the context of parallel hashing, 
it serves \textit{a priori} no purpose to process twice a same message block or chaining value.

Let us denote by $L_i$ the list of nodes at level $i$. 
Given the definitions of a level arity and of our inner hash function, 
the parallel running time to process a hash tree of height $h$ is equal to 
\[\sum_{i=1}^h \max_{\mathrm{Node} \in L_i} \mathrm{arity}(\mathrm{Node}),\]
where the function $\mathrm{arity()}$ returns the arity of a node.
The \textit{total work} to process a hash tree is equal to
\[\sum_{i=1}^h \sum_{\mathrm{Node} \in L_i} \mathrm{arity}(\mathrm{Node}).\]
In other words, this quantity corresponds to the running time when it is executed 
sequentially (\textit{i.e.} by a single processor).

\section{Optimization of hash trees for parallel computing}\label{sec:optim}

\subsection{Minimizing the running time}\label{subsec:runni}

In order to optimize the running time of a tree mode, we make a certain degree of flexibility 
on the choices of node arities. 
We can note immediately that allowing different 
node arities in a same level of the tree provides no efficiency gains. Worse, 
the running time may be less interesting since a tree level processing running time is bounded by the running time to process the node 
having the highest arity. 
This observation suggests that, in order to hope a reduction of the tree processing running time, 
node arities at the same level need to be set 
to the same value\footnote{Except maybe the rightmost node which may be of smaller arity.} while allowing arities to vary from one level to another.
Therefore our strategy allows a different arity at each level of the tree.

Let us denote $l$ the block-length of a message. The problem is to find a tree height $h$ and integer arities $x_1$, $x_2$, ..., $x_h$ 
such that $\sum_{i=1}^h x_i$ is minimized. When constructing a hash tree having its leaves at the same depth,
we seek integers $x_1$, $x_2$, ..., $x_h$ such that $\lceil \lceil \cdots \lceil \lceil l/x_1 \rceil / x_2 \rceil \cdots \rceil/ x_h \rceil = 1$.
Since we have 
\[\lceil \lceil \cdots \lceil \lceil l/x_1 \rceil / x_2 \rceil \cdots \rceil/ x_h \rceil = \lceil l/(x_1x_2 \cdots x_h) \rceil\]
for (strictly) positive integers $(x_j)_{j=1 \ldots h}$,
any solution to the problem must necessarily satisfy the following constraints:
\begin{equation}\label{contraintes}
\prod_{i=1}^h x_i \ge l\ \textrm{and}\ 
\left(\prod_{i=1}^{h} x_i\right)/x_j < l\quad \forall\ j \in \llbracket 1,h \rrbracket.
\end{equation}

Note that if a solution $(x_1, x_2 , \ldots, x_h)$ does not satisfy the second contraint in~(\ref{contraintes}), this means that a better solution
exists. A solution to this problem is a multiset of arities.
First, we show that, in a non-asymptotic setting, 
a perfect ternary tree 
comes closer to optimality 
than a perfect binary tree.
Then we examine the case of trees having different arities at each level.


First of all, we can start by considering the $h$ and $x_i$ (for $i=1 \ldots h$) as real numbers. Thus, 
we have to minimize the summation of $x_i$ subject to the constraint that their product is $l$. 
We know that the minimum is reached when the $x_i$
are equal to the same number, which we will denote $x$. So we have
$x^h=l$, that is $x=l^{\frac{1}{h}}$. We must now determine
$h$ so that $h l^{\frac{1}{h}}$ is minimized.
The calculation of a derivative shows that this minimum is reached for $h=\ln(l)$,
which implies $x=e$. 
Consequently, we can wonder what the best solution is between a perfect binary tree and a perfect ternary tree.
The comparison of these two cases is done in Appendix \ref{comp} and shows that
beyond a certain message size $l$ ($l=2^{28}$), a perfect ternary tree gives a better running time than a perfect binary tree.
In fact, as the present general study shows, a tree having different level arities can give better results.
Let us remind that node arities are not allowed to vary in a same level (same stage) of the tree.
A level of the tree is said to be of arity $a$ when all nodes at this level are of arity at most~$a$.
Given an optimal tree (in the sense of the running time) for hashing, we can ask what the possible arities are for its levels. 
We have the following Theorem:

\begin{theorem}\label{arites}
For a hash tree whose running time is optimal, the following statements hold: 
\begin{itemize}
 \item It can be comprised of levels of arity $2$, $3$, $4$, or $5$. Higher arities are not possible.
 \item It can be constructed using only levels of arity $2$ and $3$.
\end{itemize}
\end{theorem}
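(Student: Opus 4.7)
The plan is to prove the two statements in order, using contradiction arguments based on the minimality of the sum $\sum x_i$ subject to the constraints in~(\ref{contraintes}).

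For the first statement, I would assume for contradiction that some optimal tree contains a level of arity $k \geq 6$ and exhibit a replacement that strictly decreases the sum while preserving the constraint $\prod x_i \geq l$. The key numerical observation is that for every integer $k \geq 6$, the pair of arities $(2, \lceil k/2 \rceil)$ satisfies $2\lceil k/2 \rceil \geq k$ together with $\lceil k/2 \rceil + 2 < k$. Substituting the single level of arity $k$ by two consecutive levels of arities $2$ and $\lceil k/2 \rceil$ therefore produces a tree with strictly smaller sum whose product is at least that of the original tree, hence at least $l$. This contradicts the optimality of the starting tree. If $\lceil k/2 \rceil$ happens to be $\geq 6$ itself (which can occur for $k \geq 12$), I would apply the same replacement recursively on the new large arity; the process terminates since the arities strictly decrease at each iteration.

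For the second statement, I would start from an optimal tree whose arities, by the first statement, all lie in $\{2,3,4,5\}$, and apply the substitutions $4 \mapsto (2,2)$ and $5 \mapsto (2,3)$ at every level where they occur. Both substitutions preserve the sum (since $2+2 = 4$ and $2+3 = 5$) and do not decrease the product: it is exactly preserved for $4 \mapsto (2,2)$ and strictly increased from $5$ to $6$ for $5 \mapsto (2,3)$. The resulting tree uses only arities $2$ and $3$, has the same sum as the original, and still satisfies $\prod x_i \geq l$.

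The main obstacle will be to check that the modified tree still satisfies the second (minimality) constraint in~(\ref{contraintes}); enlarging the multiset of arities can in principle create a redundant level, especially because the $5 \mapsto (2,3)$ substitutions make the product strictly grow. I plan to discharge this via one more contradiction with optimality: if some level of the new tree were removable while preserving $\prod x_i \geq l$, then the resulting even smaller tree would have a strictly smaller sum than the original optimal tree, which is impossible. Combining this minimality check with the sum-preserving and product-non-decreasing properties of the two substitutions yields an optimal tree built exclusively from arities $2$ and $3$, completing the second statement.
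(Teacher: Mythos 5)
Your proposal is correct and follows essentially the same replacement strategy as the paper: the second bullet uses the identical sum-preserving substitutions $4 \mapsto (2,2)$ and $5 \mapsto (2,3)$, and the first bullet differs only in that you recursively split an arity $k \geq 6$ into $(2,\lceil k/2\rceil)$ where the paper replaces it in one shot by a binary tree of height $\lceil \log_2 k \rceil$ (handling $k=6$ separately as $(3,2)$). Your closing observation that the minimality constraint in~(\ref{contraintes}) holds automatically for any sum-minimal multiset covering $l$ is sound and correctly discharges the only residual worry.
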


\begin{proof}
We prove these two assertions separately:
\begin{itemize}
 \item We first show that levels of arity $a$ with $a \geq 7$ lead to trees having a suboptimal running time.
Indeed, any node of arity $a \geq 7$ can be replaced by a tree of arity $2$ having a better running time. 
We simply have to note that $2\lceil \log_2 a \rceil < a$ for all $a>6$, meaning that
a $a$-ary tree of height $1$ can be advantageously replaced by a binary tree of height $\lceil \log_2 a \rceil$. 
In contrast, for all nodes of arity $a$ with $a \in \llbracket 3, 6 \rrbracket$ and for all $i \in \llbracket 2,5 \rrbracket$ 
we have $i\lceil \log_i a \rceil \geq a$. 
Finally, a node of arity $6$ can be replaced by a $\{3,2\}$-aries tree, since $2 \cdot 3 = 6$,
thereby reducing the running time to $2+3 = 5$ units. 
 \item As regards the second assertion, a node of arity $5$ can be replaced by a tree of arities $\{3, 2\}$, since $2 \cdot 3 = 6 > 5$. 
This transformation does not
change the running time since $2+3 = 5$. Finally, a node of arity $4$ can be replaced by a binary tree of height $2$ for a running time
which is still unchanged.
\end{itemize}
\end{proof}

An optimal tree has not necessarily a single topology. Firstly, a solution satisfying constraints
(\ref{contraintes}) can be defined as a multiset of arities since we can permute them.
For instance, suppose a tree has three levels with the first level of arity $3$, the second one of arity $2$ and the last one (that is, the root node) 
of arity $3$. We can permute these arities so that the first level is of arity $2$ and the latter two levels of arity $3$. 
If this new tree has the same running time, its topology has however changed.
Secondly, we can find examples where different multisets of arities lead to trees of optimal running time. For instance, if we consider a $7$-block message, 
the multisets of arities $\{2, 2, 2\}$, $\{3, 3\}$ and $\{4, 2\}$ allow the construction of trees having the optimal running time (see 
Figure~\ref{Same_running_time_different_trees}).
We can, however, construct optimal trees by restricting the set of possible arities. We have the following theorem:

\begin{figure}[t!]
\centering
\subfloat[$(2, 2, 2)$-aries tree]
  {
  \centering
  \includegraphics[scale=0.6]{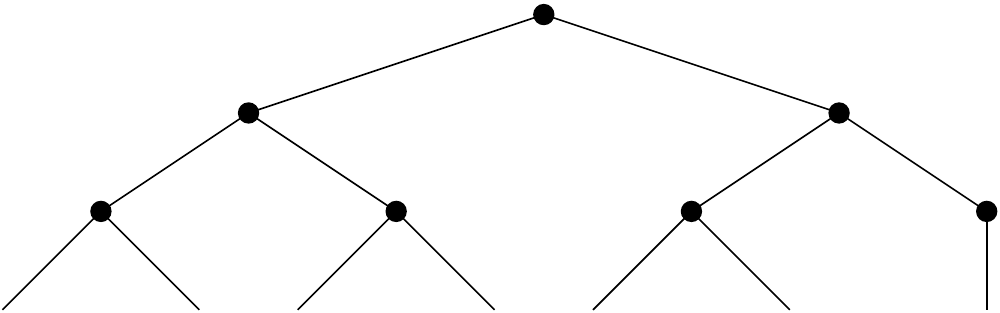}
  \label{perf1}
  }
~~~~~~~~~~
\subfloat[$(3, 3)$-aries tree]
  {
  \centering
  \includegraphics[scale=0.6]{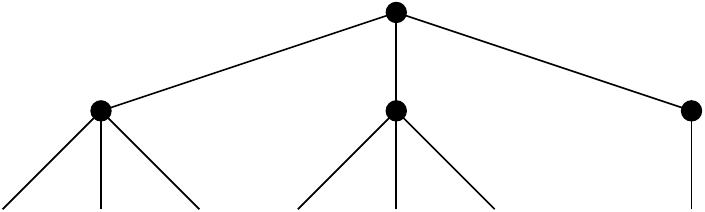}
  \label{perf1}
  }

\vspace{0.75cm}
\subfloat[$(4, 2)$-aries tree]
  {
  \centering
  \includegraphics[scale=0.6]{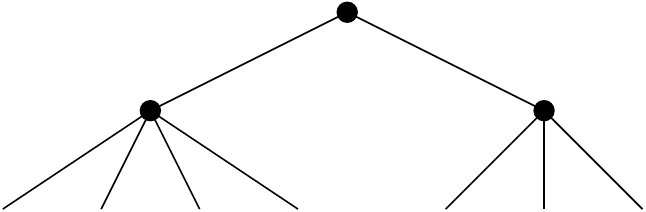}
  \label{perf2}
  }
\caption{Different topologies for a $7$-block message}
\label{Same_running_time_different_trees}
\end{figure}

\begin{theorem}\label{min_running_time_3_cases}
Let a message of length $l$ blocks and let $i$ be the lowest integer such that $3^i \geq l$. Let us note $x \in \llbracket 0,2 \rrbracket$ the value which minimizes
the product $3^{i-x}2^x$ under the constraint $3^{i-x}2^x \geq l$.  There exists an optimal tree (in the sense of optimal running time) which has $i-x$ levels 
of arity $3$ and $x$ levels of arity $2$. More precisely, we can state the followings:
\begin{itemize}
 \item If $l \leq 3^i < \frac{3l}{2}$ then a ternary hash tree is optimal for a running time of $3i$.
 \item If $\frac{3l}{2} \leq 3^i < \frac{9l}{4}$ then an optimal hash tree has $i-1$ levels of arity $3$ and one level of arity $2$,
for a running time of $2+3(i-1)$.
 \item Otherwise $\frac{9l}{4} \leq 3^i < 3l$, and then an optimal hash tree has $i-2$ levels of arity $3$ and $2$ levels of
arity $2$, for a running time of $4+3(i-2)$.
\end{itemize}
Such an optimal tree maximizes the number of levels of arity $3$.
\end{theorem}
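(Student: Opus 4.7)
The plan is to exploit Theorem~\ref{arites}, which lets us restrict attention to trees whose levels have arity $2$ or $3$. Writing $a$ for the number of arity-$3$ levels and $b$ for the number of arity-$2$ levels, the parallel running time is $3a+2b$ and the capacity constraint is $3^{a}2^{b}\geq l$. The problem then reduces to the integer optimisation: minimise $3a+2b$ over $(a,b)\in\mathbb{N}^{2}$ subject to $3^{a}2^{b}\geq l$.

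The crucial step is a reduction showing that some optimal solution has $b\leq 2$. Three arity-$2$ levels and two arity-$3$ levels both cost $6$ units, but the latter gives a larger capacity factor ($9$ versus $8$); so replacing any three arity-$2$ levels by two arity-$3$ levels preserves the running time and strictly increases the capacity. Iterating this three-for-two substitution while $b\geq 3$ terminates at an optimal solution with $b\in\{0,1,2\}$, and it then suffices to compare only three candidates. For each admissible $b$, the smallest valid number of arity-$3$ levels is $a_{b}=\lceil\log_{3}(l/2^{b})\rceil$, with cost $T_{b}=3a_{b}+2b$. Rewriting the equality $a_{b}=i-b$ as an inequality on $3^{i}/l$ produces the three regimes in the statement: $a_{1}=i-1$ iff $3^{i}\geq 3l/2$, and $a_{2}=i-2$ iff $3^{i}\geq 9l/4$. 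A case-by-case comparison of $T_{0}$, $T_{1}$, $T_{2}$ in each regime then singles out the unique winner and yields the announced running times $3i$, $3(i-1)+2$, and $3(i-2)+4$.

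The final claim that such an optimal tree maximises the number of arity-$3$ levels follows from combining the substitution argument with the choice of the smallest admissible $b$: for a fixed optimal value $T^{\star}=3a+2b$, maximising $a$ is the same as minimising $b$; since every optimal configuration can be reduced to one with $b\leq 2$, picking the smallest $b\in\{0,1,2\}$ compatible with $T^{\star}$ delivers the maximal $a$. I expect the fiddly part to be the boundary bookkeeping of the case analysis on $3^{i}/l$: for instance, verifying in regime~1 that $b=1$ forces $a_{1}=i$ and $b=2$ forces $a_{2}=i-1$ (hence $T_{1},T_{2}>3i$) requires carefully handling the inequalities $2\cdot 3^{i-1}<l\leq 3^{i}$, and similarly at the other two boundaries. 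The conceptual core, however, is the three-for-two substitution that caps $b$ at $2$; once this cap is established, the remainder is bounded-case arithmetic.
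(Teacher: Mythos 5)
Your proof is correct and follows essentially the same route as the paper's: both invoke Theorem~\ref{arites} to restrict to levels of arity $2$ and $3$, use the substitution of three arity-$2$ levels by two arity-$3$ levels (equal cost $6$, capacity $3^2=9>2^3=8$) to cap the number of arity-$2$ levels at two, and then compare the three remaining candidates $x\in\{0,1,2\}$ to obtain the three regimes. Your explicit enumeration of $T_0,T_1,T_2$ is just a slightly more detailed rendering of the paper's observation that the optimum corresponds to the largest $x$ with $3^{i-x}2^x\geq l$.
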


\begin{proof}
According to Theorem \ref{arites}, a hash tree whose running time is optimal can be constructed using only levels of arity $2$ and levels of arity $3$.
We still need to find out their numbers.
If we have at least $3$ levels of arity $2$ then we can replace these $3$ levels by $2$ levels of arity $3$ ($3^2=9>2^3=8$).
The running time to process $3$ levels of arity $2$ or $2$ levels of arity $3$ is $6$. 
Therefore, it is always possible to construct optimal trees with maximum $2$ levels of arity $2$. 
Let $i$ be such that $3^i \geq l$.
From the parallel running time standpoint, it is preferable to trade a level of arity $3$ for a level of arity $2$.
This means that the sought solution corresponds to the highest value $x \in \llbracket 0,2 \rrbracket$ 
 such that $3^{i-x}2^x\geq l$. The three assertions follow immediately.
\end{proof}


~\\
To determine the level arities of an optimal tree, we apply the following algorithm:~\\


\noindent\fbox{%
\begin{varwidth}{\dimexpr\linewidth-2\fboxsep-2\fboxrule\relax}
\paragraph{\textbf{Algorithm 1}} ~\\
INPUT: a message length $l$.~\\
OUTPUT: a multiset of arities minimizing the running time.
\begin{itemize}
 \item We first compute $i~=~\lceil \log l/\log 3 \rceil$ 
and then $x=\left\lfloor \log(l/3^i) /\log(2/3) \right\rfloor$. 
 \item We return a multiset which consists of $i-x$ first levels of arity $3$ and $x$ last levels of arity $2$.
\end{itemize}
\end{varwidth}%
}

~\\
\paragraph{\textbf{Examples}} For messages of lengths $l=4, 5$ and $10$ blocks respectively, Algorithm 1 returns the multisets 
of arities $\{2,2\}$, $\{3,2\}$ and $\{3, 2,2\}$ respectively.
The number of processors is not optimized here. This aspect is addressed in the following section. ~\\


The result can be either a \emph{perfect} tree where the number of leaves is greater than the message length (the tree is said to be \emph{perfect} since, on the one hand, 
nodes at a same level are all of same arity, and, on the other hand, all the leaves are at the same depth), or a \emph{truncated} tree since it 
is possible to prune some right branches to remove this surplus of leaves.
In the rest of the paper, we refer to a truncated $(x_1,x_2, \ldots, x_h)$-aries tree to speak about a tree having a number of leaves equal 
to the message length and where the nodes of the base level are of arity at most $x_1$, nodes at the second level are of arity at most $x_2$ and so on.~\\

As a last remark, since the hash function must be deterministic, the multiset of arities must also be chosen deterministically 
as a function of the message size. For instance, we can arrange in descending order the elements of the multiset of arities. 
The solution to the problem of minimizing the running time is then uniquely determined as an ordered multiset.

\paragraph{\textbf{Performance improvements}}
We have seen that for a message of 6 blocks (see Figure~\ref{Arbre_exemple}), the performance gain of an optimal tree compared to a binary tree is 20\%.
Figure~\ref{perf1} shows the running times of an optimal tree and a binary tree as functions of the message size varying from $1$ to $10^5$ blocks. 
Figure~\ref{perf2} shows the speed gain obtained with an optimal tree. 
The gain in time (or speedup gain) 
is computed as $100(T_b/T_o~-~1)$ where $T_b$ is the running time of a binary tree and $T_o$ the running time of an optimal tree. 
As we can see, the gain differs from one message size to another. The gain can be greater than 30\% for very short messages
but decreases quickly, to cap at 10\%. 
As regards the message size, although the diagram
does not cover a sufficiently long range, one can note a slight downward slope. 
\begin{figure}[htb]
\centering
\subfloat[Running time of an optimal tree (shown in blue) compared to a binary tree (in black)]
  {
  \centering
  \includegraphics[scale=0.31]{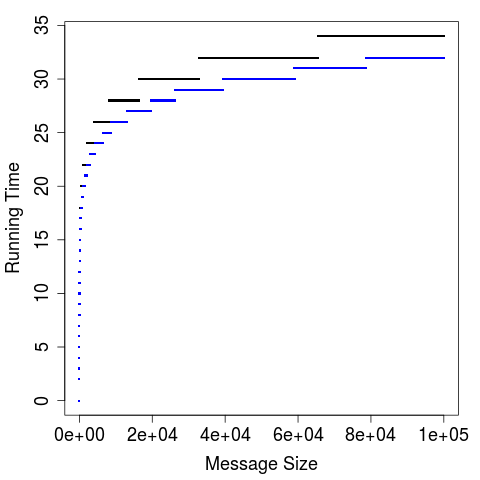}
  \label{perf1}
  }

\subfloat[Speed gain of an optimal tree compared to a binary tree]
  {
  \centering
  \includegraphics[scale=0.31]{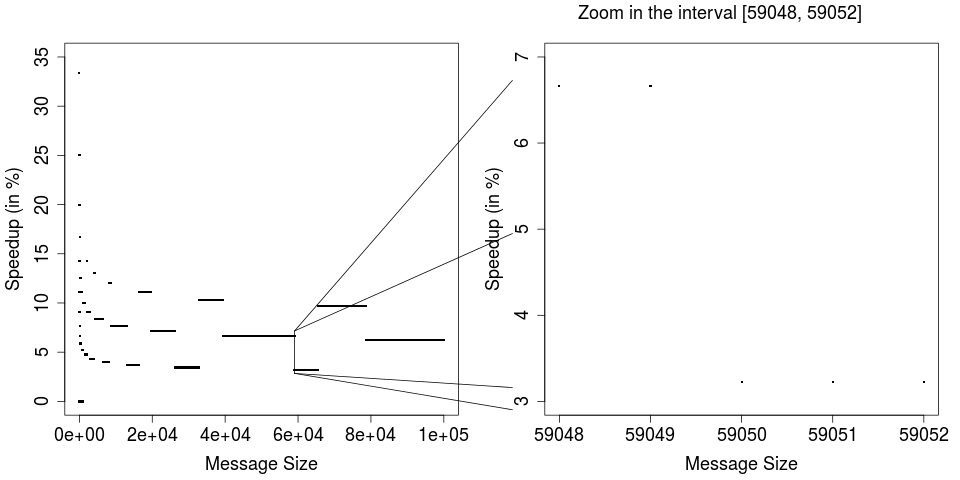}
  \label{perf2}
  }
\caption{Performance comparison between an optimal tree and a binary tree}
\label{Gains_execution}
\end{figure}

\subsection{Minimizing the number of processors}\label{subsec:proce}

In this section we look into how to minimize the number of required processors to obtain the optimal running time.
We have two cases to study, the trees having all leaves at the same depth and the others. 
We fully treat the first case and we make a few observations regarding the second type of tree, 
which we intuitively sense to further reduce the number of required processors.~\\

At the outset, one may be interested in the maximum possible number of levels of arity $5$ or $4$. We have the following Lemma:

\begin{lemma}\label{numb_ar_4_and_5}
In a tree having an optimal running time the following statements hold:
\begin{itemize}
 \item There can be up to $1$ level of arity $5$; 
 \item There can be up to $6$ levels of arity $4$.
\end{itemize}
\end{lemma}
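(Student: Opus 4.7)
The plan is to argue by contradiction, exploiting the fact established in Theorem~\ref{arites} that the optimal running time for any message length $l$ is already achievable using only levels of arity~$2$ and~$3$. Since the parallel running time of a tree whose leaves are at the same depth is the sum of its level arities and the available capacity (number of leaves) is their product, both bullets reduce to a purely local statement about the multiset of arities: if the multiset contained too many~$5$s (resp.\ too many~$4$s), one could produce a new multiset with strictly smaller sum and at least as large a product, contradicting optimality.

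For the first bullet, I would suppose that the multiset of arities contains at least two levels of arity~$5$, and replace those two entries by three entries equal to~$3$. The sum of the replaced block drops from $2 \cdot 5 = 10$ to $3 \cdot 3 = 9$, while its product grows from $5^2 = 25$ to $3^3 = 27$. Hence the modified tree has strictly smaller total running time and at least as many leaves, so it still accommodates the message, contradicting the optimality of the original tree. This forces at most one level of arity~$5$.

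For the second bullet, I would suppose that the multiset contains at least seven levels of arity~$4$ and replace seven of them by nine levels of arity~$3$. The corresponding sum drops from $7 \cdot 4 = 28$ to $9 \cdot 3 = 27$, and the corresponding product grows from $4^7 = 16384$ to $3^9 = 19683$. Again the running time strictly decreases while capacity is preserved, so the same contradiction rules out trees with more than six levels of arity~$4$.

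The point requiring some care, rather than a genuine obstacle, is that the multiset-level swap really does correspond to a valid tree transformation: by Theorem~\ref{min_running_time_3_cases} an optimal tree is determined, up to the order of its levels, by the multiset of its level arities; the message only has to fit into the overall capacity (the product); and any surplus leaves may simply be pruned as in the definition of a truncated $(x_1,\ldots,x_h)$-aries tree. Thus the local replacement of $\{5,5\}$ by $\{3,3,3\}$ and of seven $4$s by nine $3$s yields a well-defined competing tree, which is all the contradiction requires.
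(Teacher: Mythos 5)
Your proposal is correct and follows essentially the same route as the paper: both bullets are proved by the same local replacements, two levels of arity $5$ by three levels of arity $3$ (sum $10 \to 9$, product $25 \to 27$) and seven levels of arity $4$ by nine levels of arity $3$ (sum $28 \to 27$, product $16384 \to 19683$), each contradicting optimality of the running time. Your closing remark that the multiset swap yields a well-defined competing tree is a reasonable extra precaution that the paper leaves implicit.
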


\begin{sloppypar}
\begin{proof}
We prove these two assertions separately:
\begin{itemize}
 \item Suppose that the tree has $2$ levels of arity $5$. We 
replace these $2$ levels by $3$ levels of arity $3$ since $3^3=27>5^2=25$.
The running time is improved since $3\cdot3~=~9<2\cdot5=10$. 
We can then state that $2$ levels of arity $5$ lead to a tree having a sub-optimal running time. 
 \item Now, let us look for a pair of minimum integers $(i,j)$ satisfying $3^i \geq 4^j$ and $3\cdot i<4\cdot j$. 
The first pair which satisfies these constraints is $i=9$ and $j=7$. We can then replace
$7$ levels of arity $4$ by $9$ levels of arity $3$ in order to decrease the running time. 
\end{itemize}
\end{proof}
\end{sloppypar}

\begin{sloppypar}
We have seen that it is possible to construct a tree optimizing the running time by using only levels of arity $2$ and $3$.
In what follows, we 
show how 
to deduce an optimal tree minimizing the number of involved processors.
Let us suppose that level arities $x_1$, $x_2$ ..., $x_h$ are noted in (no strictly) decreasing order so that $x_1$ 
is the arity of the base level and $x_h$ the arity of the last level, 
\textit{i.e.} the arity of the root node. 
The trees optimizing the running time, defined above, are not necessarily full in the sense that a rightmost node at a given level can be of arity 
strictly lower than the arity of this level. First, we note that for the trees constructed with Algorithm 1, the number of required processors
is equal to $\lceil l/3 \rceil$ in the best case, and equal to $\lceil l/2 \rceil$ when there are only levels of arity $2$. Moreover, 
according to Theorem \ref{arites} 
we know that a level arity cannot be greater than $5$. This means that in the best case, after optimization, the number of required processors 
could be reduced to $\lceil l/5 \rceil$. Thus, we could in the best case decrease the number of processors 
by a factor of about $5/2$.
\end{sloppypar}

Given an optimal tree for the running time, the intent is to increase the arity of the first level (base level) while decreasing arities 
of the following levels so that the sum of the level arities remains constant and their product remains greater than or equal to $l$.
To solve this problem we propose in Appendix \ref{red_number_processors} two solutions (Algorithm 2a or 2b). However, as will be discussed below,
we can further optimize hash trees.

According to Theorem \ref{arites}, a level arity of a tree minimizing the running time cannot exceed $5$. 
Thus, Algorithm 2a (or Algorithm 2b) of Appendix \ref{red_number_processors} allows us to substitute any sub-multiset $A$ for 
another one, denoted $A'$, where the sum of arities remains the same, 
and by trying to increase the arity of the base level up to $5$.
Consider, for instance, a message of size $l=95$ blocks. With such a message size, Algorithm 1 returns the multiset of arities $A_0=\{3,3,3,2,2\}$
which defines a tree structure involving $32$ processors.
By applying Algorithm 2, we obtain the multiset $A_1=\{4,3,3,3\}$ which reduces the number of involved processors to $24$ 
while leaving the running time unchanged.
~\\

\noindent
\textbf{What if the arity of each level is increased? As much as possible?}
We just saw 
that we can increase the arity of the first level.
It would also be preferable to increase the arity of each level of the tree in order to free up the highest number of processors 
at each step of the computation. An example is depicted in Figure \ref{Gain_proc}.
\begin{figure}[h]
  \centering
  \resizebox{1\textwidth}{!}{
  \includegraphics{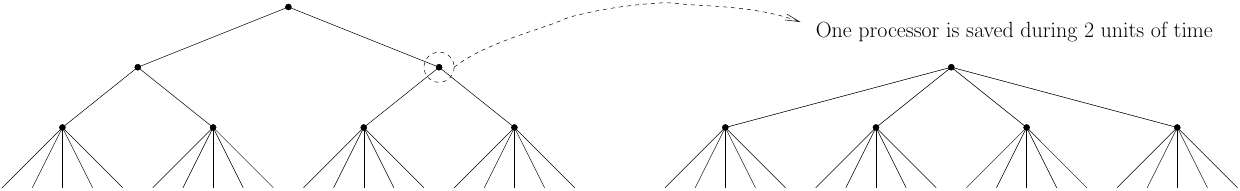}
 }
 \caption{Two trees compressing a 20-block message, optimized both for the running time and the number of involved processors. 
 Both trees require 4 processors.
 Nevertheless, we note that the right tree is the best choice. Indeed, the one on the left needs 4 processors during 5 units of time,
 then 2 processors during 2 units of time, and finally one processor during 2 units of time. The one on the right needs 4 processors during 5 units of time and then one processor during 4 units 
 of time.}
\label{Gain_proc}
\end{figure}

While we propose an iterative algorithm in Appendix \ref{red_number_processors} to construct an optimal tree maximizing the arity of each level, we also 
enumerate all possible cases in the following Theorem:

\begin{theorem}\label{min_running_time_max_arities}
For any integer $l \geq 32$ there is an unique ordered multiset $A$
of $h_5$ arities $5$, $h_4$ arities $4$, $h_3$ arities $3$ and $h_2$ arities $2$
such that the corresponding tree covers a message size $l$, has a minimal running time
and has first $h_5$ as large as possible, then $h_4$ as large as possible, and then 
$h_3$ as large as possible.
More precisely, if $i$ is the lowest integer such that $l \leq 3^i < 3l$, 
this ordered multiset is such that:
\[
\left\{
\begin{array}{llrll}
|A|=i,    h_5=0,  h_4=0,  h_3=i,    h_2=0  & \hbox{ if } & l \leq & 3^i & < \frac{9l}{8}, \\
|A|=i,    h_5=0,  h_4=1,  h_3=i-2,  h_2=1 & \hbox{ if } & \frac{9l}{8} \leq & 3^i & < \frac{81l}{64}, \\
|A|=i-1,  h_5=0,  h_4=3,  h_3=i-4,  h_2=0 & \hbox{ if } & \frac{81l}{64} \leq & 3^i & < \frac{27l}{20}, \\
|A|=i-1,  h_5=1,  h_4=1,  h_3=i-3,  h_2=0 & \hbox{ if } & \frac{27l}{20} \leq & 3^i & < \frac{3l}{2}, \\
|A|=i,    h_5=0,  h_4=0,  h_3=i-1,  h_2=1 & \hbox{ if } & \frac{3l}{2} \leq & 3^i & < \frac{27l}{16}, \\
|A|=i-1,  h_5=0,  h_4=2,  h_3=i-3,  h_2=0 & \hbox{ if } & \frac{27l}{16} \leq & 3^i & < \frac{9l}{5}, \\
|A|=i-1,  h_5=1,  h_4=0,  h_3=i-2,  h_2=0 & \hbox{ if } & \frac{9l}{5} \leq & 3^i & < \frac{81l}{40}, \\
|A|=i-1,  h_5=1,  h_4=1,  h_3=i-4,  h_2=1 & \hbox{ if } & \frac{81l}{40} \leq & 3^i & < \frac{9l}{4}, \\
|A|=i-1,  h_5=0,  h_4=1,  h_3=i-2,  h_2=0 & \hbox{ if } & \frac{9l}{4} \leq & 3^i & < \frac{81l}{32}, \\
|A|=i-1,  h_5=0,  h_4=2,  h_3=i-4,  h_2=1 & \hbox{ if } & \frac{81l}{32} \leq & 3^i & < \frac{27l}{10}, \\
|A|=i-1,  h_5=1,  h_4=0,  h_3=i-3,  h_2=1 & \hbox{ if } & \frac{27l}{10} \leq & 3^i & < 3l, \\
\end{array}
\right .
\]
where the number $h_3$ is at least $1$ in the first case and can be $0$ in the other cases. 
\end{theorem}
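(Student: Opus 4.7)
The overall plan is to combine the three preceding results to reduce the claim to a finite combinatorial optimisation. Let $i$ be the least integer with $l\le 3^i$. By Theorem~\ref{min_running_time_3_cases}, the minimum running time $T$ is fully determined by $l$ and equals $3i$, $3i-1$, or $3i-2$ according to which sub-interval of $[l,3l)$ contains $3^i$. By Theorem~\ref{arites} and Lemma~\ref{numb_ar_4_and_5}, any multiset achieving $T$ uses arities in $\{2,3,4,5\}$ with $h_5\in\{0,1\}$. Writing $P := 5^{h_5}4^{h_4}3^{h_3}2^{h_2}$ for the total leaf count, the problem becomes: among non-negative integer tuples satisfying $5h_5+4h_4+3h_3+2h_2 = T$ and $P \ge l$, find the one that maximises $(h_5,h_4,h_3)$ lexicographically.

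The key observation is that once $(h_5,h_4)$ is fixed, the maximum admissible $h_3$ together with the corresponding non-negative $h_2$ is uniquely determined by the residue of $R := T-5h_5-4h_4$ modulo~$3$: one takes $(h_3,h_2) = (R/3,0)$, $((R-4)/3,2)$, or $((R-2)/3,1)$ according as $R\equiv 0,1,2 \pmod{3}$. Substituting back expresses $P$ as a rational multiple of $3^i$, so the constraint $P\ge l$ becomes an explicit threshold on the ratio $3^i/l$. Uniqueness of the optimal multiset follows immediately: lex-maximality over $(h_5,h_4,h_3)$ coincides with lex-maximality over $(h_5,h_4)$ since $h_3$ is forced once $(h_5,h_4)$ is chosen, and $h_2$ is forced in turn.

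It remains to carry out a finite case analysis over the candidate pairs $(h_5,h_4)\in\{0,1\}\times\{0,1,2,3\}$ for each of the three values of $T$. Call a candidate \emph{dominated} if some lex-larger candidate has a threshold no greater than its own, so that the lex-larger pair is always preferred. For $T=3i$ the undominated candidates are $(1,1), (0,3), (0,1), (0,0)$ with thresholds $\frac{27l}{20}, \frac{81l}{64}, \frac{9l}{8}, l$, yielding the first four cases of the theorem. For $T=3i-1$ they are $(1,1), (1,0), (0,2), (0,0)$ with thresholds $\frac{81l}{40}, \frac{9l}{5}, \frac{27l}{16}, \frac{3l}{2}$, yielding cases 5--8; here $(0,3)$ is discarded because its threshold $\frac{243l}{128}$ exceeds that of the lex-larger $(1,0)$. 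For $T=3i-2$ the undominated candidates are $(1,0), (0,2), (0,1)$ with thresholds $\frac{27l}{10}, \frac{81l}{32}, \frac{9l}{4}$, yielding cases 9--11; here $(1,1)$ is infeasible because its threshold $\frac{243l}{80}$ exceeds the ambient upper bound $3l$ on $3^i$, and $(0,3)$ is again dominated by $(1,0)$. Assigning to each sub-interval the lex-maximal feasible pair reproduces exactly the eleven cases of the statement. The hypothesis $l\ge 32$ enters only to ensure $h_3\ge 0$ when the recipe yields $h_3=i-4$ (cases~3, 8 and~10); on those sub-intervals a direct verification shows that $l\ge 32$ forces $i\ge 4$.

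The main obstacle is combinatorial bookkeeping rather than conceptual difficulty. One must enumerate the candidate pairs, compute each threshold via the residue recipe, identify the dominated candidates, and check that the surviving thresholds sort into the monotone sequence predicted by the statement; a small consistency check is also required at the boundaries $3^i=\frac{3l}{2}$ and $3^i=\frac{9l}{4}$ between the three values of $T$, where two parallel analyses must agree on the chosen multiset.
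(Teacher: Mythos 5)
Your overall strategy is essentially the paper's: fix the minimal running time $T\in\{3i,3i-1,3i-2\}$ via Theorem~\ref{min_running_time_3_cases}, restrict arities to $\{2,3,4,5\}$ via Theorem~\ref{arites} and Lemma~\ref{numb_ar_4_and_5}, turn $P\ge l$ into a threshold on $3^i/l$ for each candidate, and discard lexicographically dominated candidates. Your residue-of-$R$ recipe for forcing $(h_3,h_2)$ from $(h_5,h_4)$ is a clean reformulation of the paper's bookkeeping with the transformation $3b+2a=5u+4v+2w$, and your eleven surviving cases and thresholds all agree with the statement.

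There is, however, one concrete gap: you restrict the enumeration to $(h_5,h_4)\in\{0,1\}\times\{0,1,2,3\}$ without justification, whereas Lemma~\ref{numb_ar_4_and_5} only gives $h_4\le 6$, and candidates with $h_4\ge 4$ are \emph{not} all infeasible. For $T=3i$ the pair $(0,4)$ yields $(h_3,h_2)=(i-6,1)$ with product $512\cdot 3^i/729$, hence threshold $\frac{729l}{512}\approx 1.424\,l<\frac{3l}{2}$, so it is feasible on a nonempty sub-interval of group I; for $T=3i-1$ the pair $(0,5)$ (equivalently your $(0,4)$ with $h_2=2$) has threshold $\frac{2187l}{1024}\approx 2.136\,l<\frac{9l}{4}$ and is likewise feasible in group II. These are exactly the fifth case of group I and the sixth case of group II that the paper lists among its fifteen feasible cases and then deletes. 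Your argument as written never examines them, so lex-maximality of $(0,3)$ on $[\frac{81l}{64},\frac{27l}{20})$ and of $(1,1)$ on $[\frac{81l}{40},\frac{9l}{4})$ is not actually established. The fix is short: since $P_{(1,1)}/P_{(0,4)}=\frac{20/27}{512/729}=\frac{540}{512}>1$ in group I (and similarly $P_{(1,1)}>P_{(0,5)}$ in group II), the lex-larger pair $(1,1)$ is feasible whenever these high-$h_4$ candidates are, so they are dominated; and $(0,6)$, $(1,2)$ and beyond are infeasible because their thresholds exceed the upper end of the relevant interval. Adding this check (or an a priori argument capping $h_4$ at $3$ among undominated candidates) closes the gap; the remainder of your proof, including the uniqueness argument and the treatment of $l\ge 32$, is sound.
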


\begin{proof}
Let us start from the 3 cases of Theorem \ref{min_running_time_3_cases} which maximize the number of levels of arity 3. 
For a given message length $l$, we consider the corresponding optimal tree (in the sense of the running time). 
We denote by $a$ the initial number of levels of arity $2$ and by $i-a$ the initial (maximized) number of levels of arity $3$. 
We want to transform this tree in order to increase the arity of each level as much as possible, while leaving the running time unchanged.
According to Lemma \ref{numb_ar_4_and_5}, there can be one level of arity 5 and up to six levels of arity 4.
Since we want to maximize the number of levels of arity 4 after having maximized the number of levels of arity 5, 
there cannot be more than one level of arity 2. Thus, $h_5 \in \llbracket 0, 1 \rrbracket$, $h_4 \in \llbracket 0, 6 \rrbracket$ 
and $h_2 \in \llbracket 0, 1 \rrbracket$, meaning there shall be at most 28 cases.
Note that among these 28 cases, many may not be valid solutions.
The aim is to transform the initial product $2^a3^{i-a}$ into
a product $2^w3^{i-a-b}4^v5^u$ where $a$ and $b$ are respectively the number of levels of arity $2$ and the number of levels of arity $3$ 
that we have transformed, and $u$, $v$, $w$ the number of levels of arity $5$,
$4$, $2$ respectively.
For each triple $(h_5=u,h_4=v,h_2=w)$ with $u \in \llbracket 0, 1 \rrbracket$, $v \in \llbracket 0, 6 \rrbracket$ 
and $w \in \llbracket 0, 1 \rrbracket$, we can verify that there is a solution $(a,b)$ with $a$ an integer in $\llbracket 0,2 \rrbracket$
and $b$ a positive integer such that $3b + 2a = 5u + 4v + 2w$. We remark that $3b + 2a$ can be rewritten $3b$ if $a=0$, $3(b+1)+1$ if $a=2$, and $3b+2$ if $a=1$. 
Thus, all but one integers are produced.
We can also verify that this solution is unique. Indeed, let us suppose a second solution $(a',b')$. Since $3b+2a=3b'+2a'$, we have
$3(b'-b)=2(a'-a)$, meaning that $3$ divides $(a'-a)$. This is impossible, unless $a'=a$. Such a solution must satisfy  $3^{i-a-b}2^w4^v5^u \geq l$, 
that is
\begin{eqnarray}\label{eqn_to_verify}
3^i \geq \frac{3^{a+b}l}{2^w4^v5^u}.
\end{eqnarray}
According to Theorem \ref{min_running_time_3_cases}, we have $(3/2)^al \leq 3^i < \min(3l,(3/2)^{a+1}l)$. Consequently, if we have 
\[\frac{3^{a+b}l}{2^w4^v5^u} \geq \min\left(3l,\left(\frac{3}{2}\right)^{a+1}l\right),\]
\[\textrm{where}\ \min\left(3l,\left(\frac{3}{2}\right)^{a+1}l\right)=\left\{
\begin{array}{ll}
  3l & \textrm{if}\ a=2\\
 (3/2)^{a+1}l & \textrm{if}\ a=0,1
\end{array}
\right.,\]
this solution does not meet the constraint (\ref{eqn_to_verify}), and then cannot exist. Among the 28 cases, we observe that 13 of them are not valid solutions.
Thus, we have 15 solutions, denoted $(u,v,w,a,b)$, for which we compute and sort the values $3^{a+b}l/(2^w4^v5^u)$ so that we can establish
their domains of validity. We then obtain the fifteen following cases:
\[
(\textrm{I})\left\{
\begin{array}{llrll}
|A|=i,    h_5=0,  h_4=0,  h_3=i,    h_2=0  & \hbox{ if } & l \leq & 3^i & < \frac{3l}{2}, \\
|A|=i,    h_5=0,  h_4=1,  h_3=i-2,  h_2=1 & \hbox{ if } & \frac{9l}{8} \leq & 3^i & < \frac{3l}{2}, \\
|A|=i-1,  h_5=0,  h_4=3,  h_3=i-4,  h_2=0 & \hbox{ if } & \frac{81l}{64} \leq & 3^i & < \frac{3l}{2}, \\
|A|=i-1,  h_5=1,  h_4=1,  h_3=i-3,  h_2=0 & \hbox{ if } & \frac{27l}{20} \leq & 3^i & < \frac{3l}{2}, \\
|A|=i-1,  h_5=0,  h_4=4,  h_3=i-6,  h_2=1 & \hbox{ if } & \frac{729l}{512} \leq & 3^i & < \frac{3l}{2}, \\
\end{array}
\right .
\]
\[
(\textrm{II})\left\{
\begin{array}{llrll}
|A|=i,    h_5=0,  h_4=0,  h_3=i-1,  h_2=1 & \hbox{ if } & \frac{3l}{2} \leq & 3^i & < \frac{9l}{4}, \\
|A|=i-1,  h_5=0,  h_4=2,  h_3=i-3,  h_2=0 & \hbox{ if } & \frac{27l}{16} \leq & 3^i & < \frac{9l}{4}, \\
|A|=i-1,  h_5=1,  h_4=0,  h_3=i-2,  h_2=0 & \hbox{ if } & \frac{9l}{5} \leq & 3^i & < \frac{9l}{4}, \\
|A|=i-1,  h_5=0,  h_4=3,  h_3=i-5,  h_2=1 & \hbox{ if } & \frac{243l}{128} \leq & 3^i & < \frac{9l}{4}, \\
|A|=i-1,  h_5=1,  h_4=1,  h_3=i-4,  h_2=1 & \hbox{ if } & \frac{81l}{40} \leq & 3^i & < \frac{9l}{4}, \\
|A|=i-2,  h_5=0,  h_4=5,  h_3=i-7,  h_2=0 & \hbox{ if } & \frac{2187l}{1024} \leq & 3^i & < \frac{9l}{4}, \\
\end{array}
\right .
\]
\[
(\textrm{III})\left\{
\begin{array}{llrll}
|A|=i-1,  h_5=0,  h_4=1,  h_3=i-2,  h_2=0 & \hbox{ if } & \frac{9l}{4} \leq & 3^i & < 3l, \\
|A|=i-1,  h_5=0,  h_4=2,  h_3=i-4,  h_2=1 & \hbox{ if } & \frac{81l}{32} \leq & 3^i & < 3l, \\
|A|=i-1,  h_5=1,  h_4=0,  h_3=i-3,  h_2=1 & \hbox{ if } & \frac{27l}{10} \leq & 3^i & < 3l, \\
|A|=i-2,  h_5=0,  h_4=4,  h_3=i-6,  h_2=0 & \hbox{ if } & \frac{729l}{256} \leq & 3^i & < 3l, \\
\end{array}
\right .
\]
In accordance with Theorem \ref{min_running_time_3_cases}, we have this grouping: 
\begin{itemize}
 \item Group I consists of the five cases ensuring a running time of $3i$;
 \item Group II consists of the six cases ensuring a running time of $3i-1$;
 \item Group III consists of the four cases ensuring a running time of $3i-2$.
\end{itemize}
Now, we have to optimize the arities. In the first group, we delete the last case since it decreases $h_5$ compared to the immediately preceding case. 
For the same reasons, we delete in the second group the fourth and sixth cases. Again, in the last group, we need to delete the last case. Overall, 11 cases are deduced
by intersecting the intervals of validity.
\end{proof}

\begin{remark}
The number of cases is lower when $l < 32$. Their determinations are let to the reader.
\end{remark}

\begin{remark}\label{local_optimality_rem}
Let us consider a tree which is optimal in the sense of the theorem \ref{min_running_time_max_arities}. If we extract a final subtree by deleting 
one or several lower levels (at the bottom), the resulting tree is still optimal. Indeed, let us suppose
that the original tree has height $h$ and has $l$ leaves (for $l$ message blocks). If we delete the $j$ lower levels, the resulting tree 
has $l'=\lceil l/(x_1x_2\cdots x_j) \rceil$ leaves and is already optimal for a number $l'$ of blocks.
If this form of local optimality does not exist, we can further optimize the original tree. Indeed, let us suppose that a $(x'_{j+1},x'_{j+2}, \ldots, x'_{h'})$-aries 
tree covers the number of blocks $l'$ and improves either the running time or the number of processors (in the sense of Theorem~\ref{min_running_time_max_arities}), 
compared to the $(x_{j+1},x_{j+2}, \ldots, x_h)$-aries tree. This means that a $(x_1, \ldots, x_j,x'_{j+1},x'_{j+2}, \ldots, x'_{h'})$-aries tree is a better choice 
to process $l$ leaves.
\end{remark}

Let us now consider trees having a number of leaves equal to the message length. 
Having a multiset of arities arranged in descending order, that we denote \break $A=\{x_1,x_2, \ldots, x_{|A|}\}$,
the number of nodes of level $i$ is $\lceil l/(x_1x_2 \ldots x_i) \rceil$. One important thing is the number of nodes 
of the base level. We have the following Corollary:

\begin{corollary}\label{numb_proc}
Let the message size be $l \geq 2$ and let $i$ be the lowest integer such that $3^i \geq l$. 
The number of processors required to process
such a message is:
\begin{itemize}
 \item $\lceil l/3 \rceil$ if $3^i \in [ l, \frac{9l}{8} [ \cup [ \frac{3l}{2}, \frac{27l}{16} [$,
 \item $\lceil l/4 \rceil$ if $3^i \in [ \frac{9l}{8} , \frac{27l}{20} [
 \cup [ \frac{27l}{16} , \frac{9l}{5} [ \cup [ \frac{9l}{4} , \frac{27l}{10} [$,

 \item $\lceil l/5 \rceil$ if $3^i \in [ \frac{27l}{20} , \frac{3l}{2} [
 \cup [ \frac{9l}{5} , \frac{9l}{4} [ \cup [ \frac{27l}{10} , 3l [$.
\end{itemize}

\end{corollary}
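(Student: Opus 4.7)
The starting observation is that, in our model, the number of required processors equals the number of nodes at the base level of the tree, which for a multiset of arities arranged in descending order $A=\{x_1,\dots,x_{|A|}\}$ is simply $\lceil l/x_1\rceil$. Hence the whole task reduces to reading off, from each of the eleven cases of Theorem~\ref{min_running_time_max_arities}, the value of the largest arity $x_1$ present in the multiset (equivalently: $5$ if $h_5=1$, else $4$ if $h_4\geq 1$, else $3$, which holds since $h_3\geq 1$ whenever the two previous counts vanish).

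The plan is therefore to traverse the eleven cases of Theorem~\ref{min_running_time_max_arities} in order and group them according to this maximum arity. First I would handle the cases with $h_5=1$ (cases 4, 7, 8, 11), giving base arity $5$ and hence $\lceil l/5\rceil$ processors; their intervals are $[27l/20,3l/2)$, $[9l/5,81l/40)$, $[81l/40,9l/4)$ and $[27l/10,3l)$, and the second and third are contiguous and thus merge into $[9l/5,9l/4)$. Next I would handle the cases with $h_5=0$ and $h_4\geq 1$ (cases 2, 3, 6, 9, 10): the intervals $[9l/8,81l/64)$ and $[81l/64,27l/20)$ merge into $[9l/8,27l/20)$; the interval for case 6 is $[27l/16,9l/5)$; and $[9l/4,81l/32)$ together with $[81l/32,27l/10)$ merge into $[9l/4,27l/10)$, yielding the condition for $\lceil l/4\rceil$. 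Finally, the remaining cases 1 and 5 have $h_5=h_4=0$ and so contribute the interval $[l,9l/8)\cup[3l/2,27l/16)$ for $\lceil l/3\rceil$.

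A small point to address is that Theorem~\ref{min_running_time_max_arities} is stated for $l\geq 32$, while the Corollary covers $l\geq 2$. For the range $2\leq l<32$ I would invoke the preceding remark in the paper (``The number of cases is lower when $l<32$'') and verify by direct inspection that the same partition of the interval $[l,3l)$ into the three regions stated in the Corollary continues to identify the maximum arity of the optimal tree produced by Theorem~\ref{min_running_time_max_arities} (restricted to arities $\leq 5$); this is a straightforward finite check.

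The only real obstacle is clerical: one must merge eleven half-open intervals without leaving gaps or overlaps, and confirm that their union indeed coincides with $[l,3l)$, which is the full range of $3^i$ under the hypothesis that $i$ is the lowest integer with $3^i\geq l$. Once the intervals are correctly collated and the maximum-arity reading is confirmed case by case, the statement of the Corollary follows immediately from $x_1$ being $3$, $4$ or $5$ precisely in the three claimed regions.
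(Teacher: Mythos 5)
Your proposal is correct and matches the paper's approach: the paper's proof of this corollary is literally the one-line "These results follow immediately from Theorem \ref{min_running_time_max_arities}," and what you have written out (base-level processor count equals $\lceil l/x_1\rceil$ with $x_1$ the maximum arity, then grouping the eleven validity intervals by whether $h_5=1$, $h_4\geq 1$, or neither) is exactly the intended reading, with the interval merges checked correctly. Your added remark about the gap between the theorem's hypothesis $l\geq 32$ and the corollary's $l\geq 2$ is a legitimate point the paper glosses over, and the finite check you propose is the right way to close it.
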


\begin{proof}
These results follow immediately from Theorem \ref{min_running_time_max_arities}.
\end{proof}
~\\


An other important thing is the minimization of the \textit{total work} done by the hash tree algorithm for the processing of a single message. 
Since we are interested in hash trees having an optimal running time for a given message size $l$, 
we apply Theorem \ref{min_running_time_3_cases} or \ref{min_running_time_max_arities} to retrieve a topology.
For a perfect $(x_1,x_2, \ldots, x_h)$-aries tree constructed thanks to this theorem, the \textit{total work} is:
\[W_l=x_h+x_hx_{h-1}+x_hx_{h-1}x_{h-2} + \cdots + x_hx_{h-1}\ldots x_{2}x_1.\]
We notice that $\lceil \lceil \cdots \lceil \lceil l/x_1 \rceil / x_2 \rceil \cdots \rceil/ x_i \rceil = \lceil l/(x_1x_2 \cdots x_i) \rceil$ for (strictly) 
positive integers $(x_j)_{j=1 \ldots i}$. Consequently, for a $(x_1,x_2, \ldots, x_h)$-aries truncated tree constructed thanks to this theorem, the \textit{total work} is:
\[W_{tr,l}=l + \lceil l/x_1 \rceil + \lceil l/(x_1x_2) \rceil + \cdots + \lceil l/(x_1x_2\ldots x_{h-1}) \rceil.\]

This quantity is necessarily greater than or equal to $l$. Regarding truncated trees minimizing the running time, 
Theorem \ref{min_running_time_max_arities} indicates a topology which minimizes the \emph{total work}, by first choosing $x_1$ as large as possible, 
then choosing $x_2$ as large as possible, and so on.

\begin{remark}
 Decreasing the total work consists in decreasing as much as possible the number of nodes (apart from the number of leaves). We have to
 check if the multisets provided by Theorem \ref{min_running_time_max_arities} are preferable to others.
 Let us suppose that, for a given message size $l$, we have a (non-ordered) multiset minimizing
 the running time.
 Let us also suppose  that at least one order of this multiset minimizes the total work.
 We can show that, among all the possible orders, this is the one represented in decreasing order
 which minimizes the total work. 
 We denote such a solution by $(x_1,x_2, \ldots, x_h)$ with $x_i \geq x_{i+1}$.
 Indeed, for a given message size $l$ and for any random permutation $\pi$ of the indices $\llbracket 1,h \rrbracket$, we have 
 $x_1 x_2 \ldots x_i  \geq x_{\pi(1)} x_{\pi(2)} \ldots x_{\pi(i)}$ for all $i \leq h$. Thus, summing left sides and right sides from the $h$
 inequalities, we have $\sum_{j=1}^i \lceil l/(x_1 x_2 \ldots x_j) \rceil \leq \sum_{j=1}^i \lceil l/(x_{\pi(1)} x_{\pi(2)} \ldots x_{\pi(j)}) \rceil$.
 When this ordered multiset cannot be derived from Theorem~\ref{min_running_time_max_arities}, we show that the transformations performed in its proof
 can further decrease the total work. We recall that the composition of five types of transformation 
 can lead to the eleven cases of this theorem. These transformations change the following pairs of arities $(3,3)$, $(2,2)$, $(3,2)$, $(4,3)$ and $(4,4)$ 
 into $(4,2)$, $(4)$, $(5)$, $(5,2)$ and $(5,3)$ respectively.
 It is sufficient to show that each of these transformations can reduce the total work:
 \begin{itemize}
  \item Case $(2,2) \rightarrow (4)$: We indeed have $\lceil l/4 \rceil \leq \lceil l/2 \rceil + \lceil l/(2 \cdot 2) \rceil$.
  \item Case $(3,2) \rightarrow (5)$: We indeed have $\lceil l/5 \rceil \leq \lceil l/3 \rceil + \lceil l/(3 \cdot 2) \rceil$.
  \item Case $(3,3) \rightarrow (4,2)$: Since $\lceil l/4 \rceil + \lceil l/8 \rceil \leq l/4 + l/8 + 2$ 
  and $l/3 + l/9 \leq \lceil l/3 \rceil + \lceil l/(3 \cdot 3) \rceil$, we can show that
  $\lceil l/4 \rceil + \lceil l/(4 \cdot 2) \rceil \leq \lceil l/3 \rceil + \lceil l/(3 \cdot 3) \rceil$
  for $l \geq 29$. As regards the other values of $l$, it appears that this inequality does not hold
  for $l=9$, but we remark that this message length is not concerned by such a transformation.
  \item Case $(4,3) \rightarrow (5,2)$: By the same reasoning, we can show that 
  $\lceil l/5 \rceil + \lceil l/(5 \cdot 2) \rceil \leq \lceil l/4 \rceil + \lceil l/(4 \cdot 3) \rceil$ for $l \neq 11$ and $l \neq 12$.
  We then remark that a message of length $l=11$ or $12$ blocks cannot be covered by a $(5,2)$-aries tree.
  \item Case $(4,4) \rightarrow (5,3)$: Again, we can show that 
  $\lceil l/5 \rceil + \lceil l/(5 \cdot 3) \rceil \leq \lceil l/4 \rceil + \lceil l/(4 \cdot 4) \rceil$ for $l \neq 16$.
  We remark that a length of $l=16$ cannot be covered by a $(5,3)$-aries tree.
 \end{itemize}

\end{remark}

\section{About the distribution of cases}\label{prob_dist}

For the purpose of minimizing the number of processors at each step of the computation, we apply Theorem~\ref{min_running_time_max_arities}. 
There are 11 possible cases and we would like to estimate their distribution. We then perform an empirical estimation with a Monte Carlo simulation.
Numerous papers have analysed the distribution of file sizes, and more particularly the sizes of files transferred accross a network. 
It is shown that the average size of transferred files is about 10KB \cite{TMW97,WAWB05,BC98}, and we observe that
the Pareto distribution
is the predominantly used model \cite{BC98,Dow05,WAWB05,HSBA2003} to fit the datasets, with a \emph{shape} parameter $\rho$ generally estimated between $1$ and $2$,
and a \emph{location} parameter $\nu$ estimated by hand or as a function of the average file size (using the mean formula). For this simulation, 
we use the Pareto model with $\rho=1.5$ and 
$\nu=10^4 \cdot (\rho - 1)/\rho$ bytes in order to generate a sample of $10^6$ byte sizes.
We initialize 11 counters $cpt_i=0$ for $i=1, \ldots, 11$. 
With the assumption of a $32$-byte block size, we perform the following operations for each generated $\textrm{byte-size}$:
\begin{enumerate}
 \item We compute $v=3^{\lceil \log(\lceil \textrm{byte-size}/32 \rceil )/\log 3 \rceil}$;
 \item We check which one of the following cases is satisfied:
 Case~$1$:~$l \leq v < \frac{9l}{8}$;
 Case~$2$:~$\frac{9l}{8} \leq v < \frac{81l}{64}$;
 Case~$3$:~$\frac{81l}{64} \leq v < \frac{27l}{20}$;
 Case~$4$:~$\frac{27l}{20} \leq v < \frac{3l}{2}$;
 Case~$5$:~$\frac{3l}{2} \leq v < \frac{27l}{16}$;
 Case~$6$:~$\frac{27l}{16} \leq v < \frac{9l}{5}$;
 Case~$7$:~$\frac{9l}{5} \leq v < \frac{81l}{40}$;
 Case~$8$:~$\frac{81l}{40} \leq v < \frac{9l}{4}$;
 Case~$9$:~$\frac{9l}{4} \leq v < \frac{81l}{32}$;
 Case~$10$:~$\frac{81l}{32} \leq v < \frac{27l}{10}$;
 Case~$11$:~$\frac{27l}{10} \leq v < 3l$;~\\
 The index of the satisfied case is denoted $j$;
 \item We increment $cpt_j$.
\end{enumerate}
The relative frequency of the $i$-th case is then $cpt_i/10^6$ for $i=1, \ldots, 11$.
These estimations are made using R software \cite{RCTEAM} with the VGAM library \cite{Yee10}.
According to Corollary~\ref{numb_proc},
it follows that the number of required processors is $\lceil l/3 \rceil$ with probability 
of about $19.1\%$, $\lceil l/4 \rceil$
with probability of about $33.1\%$, and $\lceil l/5 \rceil$ with 
probability of about $47.8\%$.
The relative proportions of each individual case are depicted in Figure \ref{barplot_11_cases}.
\begin{figure}[h]
 \centering
 \includegraphics[scale=0.45]{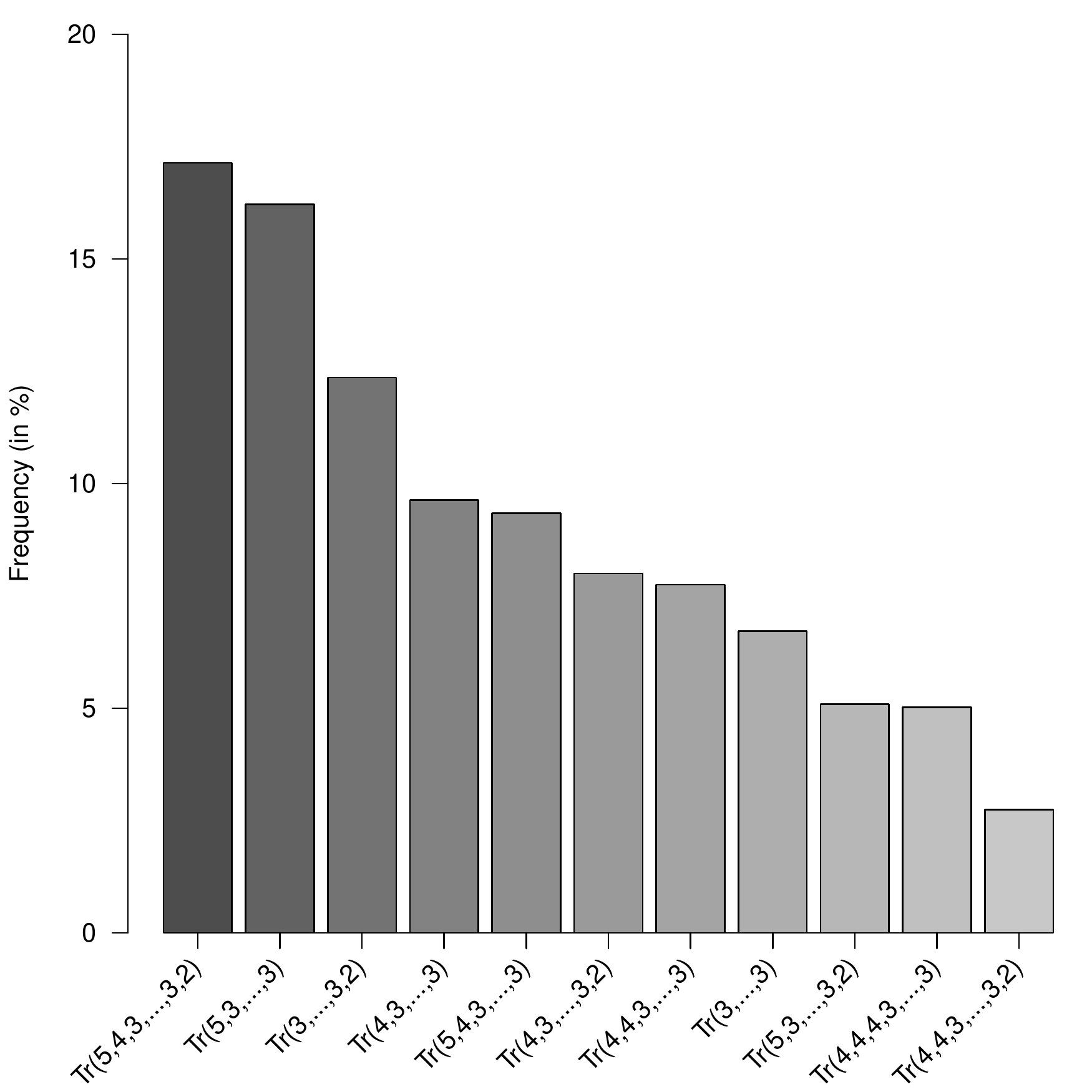}
 \caption{Proportions for the eleven cases of Theorem \ref{min_running_time_max_arities}. The bars are drawn in decreasing order of frequency.
 The notation Tr($a_1$,$a_2$,...,$a_h$) stands for a tree having arities $a_1$, $a_2$, ..., $a_h$ from the base level to the root node.}
 \label{barplot_11_cases}
\end{figure}

\section{Applying our optimizations safely}\label{sec_section2}

Suppose that we have 4 different inner functions $f_{BL}$, $f_{I}$, $f_{F}$ and $f_{SN}$ with the following properties:
\begin{itemize}
 \item $f_{t} : \{0,1\}^{qN} \rightarrow \{0,1\}^N$ for $q \geq 2$ and $t \in \{BL,$ $I,$ $F,$ $SN\}$.
 \item Without precomputation, they have the same running time of $q+2$ units of
  time when compressing $q$ message blocks of size $N$ bits.
  \item With a constant number of precomputations, they have the same running time of $q$ units of time when
  compressing $q$ message blocks.
 \item They behave like independent random oracles.
\end{itemize}
In the hash tree construction we propose, we use $f_{BL}$ to compute base level nodes, $f_{I}$ to compute inner nodes, $f_{F}$
to compute the root node. If the tree is of height one, there is only one node computed using $f_{SN}$. In order
to simulate four independent functions, we use the same inner function $f$ but with domain separation. Indeed, since $f$ behaves like a random oracle,
by construction the functions $f_{BL}(x) = f(100^{N-3} \| x)$, $f_{I}(x) = f(000^{N-3} \| x)$, $f_{F}(x) = f(010^{N-3} \| x)$ and  $f_{SN}(x) = f(110^{N-3} \| x)$
behave like independent random oracles.
~\\

Since $f$ is based on the \emph{prefix-free} MD construction, the first block encodes the number of blocks of the message, comprised between $2$ and $5$. 
Due to the domain separation codings,
the second block contains the bitstring $b_1b_20^{N-3}m_{11}$ where the 1-bit values $b_1$ and $b_2$ depend on the type of processed input, 
and $m_{11}$ is the first bit of the input. Overall, we have to precompute 16 hash states (resulting from the processing of these two blocks) in order to
obtain an inner function having a running time equals to the node arity.

\paragraph{On the precomputations costs induced by the use of a sponge function}
The first block merely consists of the bitstring $b_1b_20^{N-4}m_{11}m_{12}$ where the 1-bit values $b_1$ and $b_2$ are as above,
and $m_{11}$ and $m_{22}$ are the first two bits of the message input. Again, we have to precompute 32 hash states (resulting from the processing of only one block) 
in order to obtain an inner function having a running time equal to the node arity.

\subsection{An example of hash function}
Given a message $M$, a hash tree mode could be the following:
\begin{enumerate}
 \item Whatever the message bit-length is, we append to $M$ a bit "1" and the minimum number of bits "0" so that the total bit-length is a multiple
 of $N$. The new message is denoted $M_0$ and its total number of blocks of size $N$ bits is $l=|M_0|_2/N$, where
 $|M_0|_2$ is the bitlength of $M_0$. 
 \item We apply Theorem \ref{min_running_time_max_arities} on $l$ to retrieve the height 
 $h$ of the tree and an ordered multiset $A$ of arities $a_1$, $a_2$, ..., $a_h$ (arranged in decreasing order).
 \emph{This step requires the computation of $\log_3 l$, which can be considered free.}
 \item If $h=1$, we compute and return the hash value $f_{SN}(M_0)$. Otherwise, we go to the following step.
 \item We first split $M_0$ into blocks $M_{0,1}$, $M_{0,2}$, ..., $M_{0,l_1}$ where: ($i$) $l_1=\lceil l/a_1 \rceil$; ($ii$)~all blocks but the last one are
$a_1N$ bits long and the last block is between $N$ and $a_1N$ bits long.
 Then, we compute the message
 \[M_1:= \bigparallel_{j=1}^{l_1} f_{BL}\left(M_{0,j}\right).\]
 \item If $h=2$ we go to step 6. Otherwise, for $k=2$ to $h-1$, we perform the following operations:
 \begin{enumerate}
  \item We split $M_{k-1}$ into blocks $M_{k-1,1}$, $M_{k-1,2}$, ..., $M_{k-1,l_k}$ where: ($i$)~$l_k=\left\lceil l/\left(\prod_{j=1}^k a_j\right) \right\rceil$; 
  ($ii$)~all blocks but the last one are $a_kN$ bits long and the last block is between $N$ and $a_kN$ bits long.
  \item We compute the message 
  \[M_k:= \bigparallel_{j=1}^{l_k} f_{I}\left(M_{k-1,j}\right).\]
 \end{enumerate}
  \item We compute and return the hash value $f_{F}(M_{h-1})$.

\end{enumerate}

\subsection{Security}\label{sec_section}


Since our mode can be rewritten as if it was using only $f$,
it suffices to check whether the three conditions (seen in Section \ref{three_conditions}) are satisfied to prove soundness.~\\

\noindent
We first need to describe some rules regarding our mode:

\paragraph{Rule 0} The root $f$-input has a prepended code $010^{N-3}$ or $110^{N-3}$. 

\paragraph{Rule 1} A $f$-input with a prepended code $010^{N-3}$ has children having a prepended code $0^{N-1}$ or $10^{N-2}$. 

\paragraph{Rule 2} A $f$-input with a prepended code $10^{N-2}$ or $110^{N-3}$ has no children.

\paragraph{Rule 3} A $f$-input with a prepended code 00 has children having a prepended code $10^{N-2}$ or $0^{N-1}$.

\paragraph{Rule A} A $f$-input must be $(2+kN)$-bit long with an integer $k \geq 1$.

\paragraph{Rule B} At a same level of the tree, the number of chaining values is the same for all the $f$-inputs, except for the righmost one where 
 this number may be smaller.
 
\paragraph{Rule C} At a same level of the tree, prepended codes are the same for all the $f$-inputs.~\\

\noindent
Note that the satisfaction of the rules 0, 1, 2, 3 and C imply that the leaves are at the same depth. So, we do not need to define a rule to express this.~\\

By construction our mode
is final-$f$-input separable. Our mode is trivially message-complete since it processes all message bits. Indeed, 
having the valuated tree of $f$-inputs produced by $\tau$, the algorithm $A_{message}$ reaches directly the base level $f$-inputs and recovers 
message blocks by discarding the frame bits, whether they serve as padding purpose (in the rightmost $f$-input) or for identifying the type of $f$-input.  
This algorithm runs in linear time in the number of bits in the tree.
Finally, our mode is also tree-decodable. Thanks to domain separation between base level $f$-inputs and other $f$-inputs, 
we cannot find a tree 
which is both \emph{compliant} and \emph{final-subtree-compliant}.
Given only one $f$-input, the prepended coding allows its content to be recognized correctly. We can then construct a decoding algorithm $A_{decode}$ which runs 
in $2$ phases: 
Phase $1$ starts from the root $f$-input and fully determines the tree structure by recursively decoding each $f$-input. The size of a $f$-input determines the 
number of its children. This phase terminates with the ``correct'' state C0 if all the visited $f$-inputs respect the rules defined above.
This phase terminates with the ``incorrect'' state C1 if one of the rule 0, 2, A, B or C is not respected. 
If Rule 1 is not respected it terminates with the ``incorrect'' state C2. Otherwise, it terminates with the ``incorrect'' state C3.
Phase $2$ examines the properties of 
the decoded tree by taking into account the termination state of the first phase. The details of Phase 2 are the followings:
\begin{enumerate}
\item If the state is C0, it runs the $A_{message}$ algorithm in order to check the message size. If for the corresponding number $l$ of blocks, 
 Theorem \ref{min_running_time_max_arities} indicates a topology which differs from
 the one that Phase 1 has just decoded, then it returns ``incompliant''. Otherwise, it returns ``compliant''.

 \item If the state is C1, it returns ``incompliant''.
 \item If the state is C2, the following examinations are made:
 \begin{enumerate}
  \item If the tree seems incomplete with a single $f$-input, it checks, after having discarded the prepended code, the number of blocks of size $N$ bits.
  If there are 2, 3 or 4 blocks, then it returns ``final-subtree-compliant''. Otherwise, it returns ``incompliant''.
  \item Otherwise, the coding is incompatible with the mode, and it returns ``incompliant''.
 \end{enumerate}
 \item If the state is C3, this means that only the rule 3 
 is not respected. The following examinations are then made:
 \begin{enumerate}
  \item If there is a $f$-input with a prepended code 00 which has a child with a prepended code not equal to 10, then it returns ``incompliant''.
  \item Otherwise, there is at least one missing $f$-input. We note $h$ the maximum number of $f$-inputs visited on a path in this proper final subtree. 
  The algorithm has to check if its topology is consistent with Theorem\footnote{Meaning that it has to detect if this final subtree can be extracted 
  from an optimal tree, in the sense of this theorem.}~\ref{min_running_time_max_arities}. 
  First, it establishes a system of contraints regarding the arity of each level.  If there is only one $f$-input at a level which is not the root, 
  and if it is the righmost one\footnote{A $f$-input is the righmost one in the complete tree if we see that it does not have a right sibling, when looking at the 
  retrieved topology by Phase 1.} in the complete tree, then the number of (message or chaining) blocks it contains defines a lower bound for the arity 
  of this level. Otherwise, the arity for this level has a single possible value and the constraint is an equality. Having these $h$ constraints, it checks in 
  Theorem \ref{min_running_time_max_arities}
  which cases (among the eleven) can satisfy these contraints. We denote by $L$ the list of compatible cases and for each case $j$ in $L$ we denote by $a_i$ the arity 
  of the level $i$. For each case $j$ in $L$, the algorithm performs the following operations:
  \begin{enumerate}
   \item It completes this final subtree until its leaves are at the same depth $h$. It performs this 
   by (virtually) creating the missing $f$-inputs with a maximal arity (\emph{i.e.}, even if a missing $f$-input at level $i$ is the rightmost one 
   in the complete tree, it chooses its arity to be exactly $a_i$).
   \item It counts the number $l$ of blocks covered by the completed subtree. If $l$ is in the domain of validity of the case $j$, then it returns ``final-subtree-compliant''.
  \end{enumerate}
  A this point, no cases in $L$ are suitable. This phase 2 finally returns ``incompliant''.
 \end{enumerate}

\end{enumerate}

\noindent
The total running time of $A_{decode}$ is linear in the number of bits in the tree.


\begin{remark}If prepending 2 bits is sufficient \cite{BDPV14_Suf} for soundness, we could have used a reduction to the Sakura coding \cite{BDPV14_Sak} 
where meta-information bitstrings are longer. 
Using Sakura coding allows any tree-based hash function
to be automatically indifferentiable from a random oracle, without the need of further proofs.
According to the Sakura ABNF grammar \cite{BDPV14_Sak}, the number of chaining values (\textit{i.e.} the number of children of a $f$-input) 
is also coded in the formatted input to $f$.  
In our context and using this coding, the number of ways to format the input to $f$ (depending on 
its location in the tree topology) is at least~10. A Sakura coding bit 
expresses the fact that 
the input is or is not the last one (for the computation of the root hash). Another bit indicates whether the input contains a message block
or a certain number of chaining values (whose number, 2, 3, 4 or 5, is also coded at the beginning of the input). 
Overall, this corresponds to 10 ways to format an input in our 
tree topology. This larger encoding and the fact to have meta-information bits at the beginning (for prepended bits) 
and at the end of the input (for appended bits) both complicates our construction and increases the number of precomputed hash states.
\end{remark}

\begin{remark}
 Yet another solution is to use different IVs (Initial Values) instead of particular frame bits, as suggested in \cite{BDPV14_Suf,luk13}. We could use a \emph{free-IV} hash function,
 like the \emph{suffix-free-prefix-free} hash function from Bagheri \emph{et al.} \cite{BGKZ12}. The distinction between the $f$-inputs would be done by using 4 distinct 
 IVs: \texttt{BL\_IV} for base level $f$-inputs, \texttt{I\_IV} for inner $f$-inputs, \texttt{F\_IV} for the root $f$-input and \texttt{SN\_IV} for a tree reduced to
 a single $f$-input.
\end{remark}

\section{Conclusion}

In this paper, we focused on trees having their leaves at the same depth. 
We have shown, for a given message length, 
how to construct a hash tree minimizing the running time. 
In particular, we have shown how to minimize the
number of processors allowing such a running time. 
The proposed construction makes use of a prepended 
coding for each input to the inner function in order to satisfy 
the three conditions of Bertoni \textit{et al.} \cite{BDPV14_Suf}.
Besides, our tree topologies could also be used in substitution of
the tree hash mode of Skein, provided that the tweaks to 
the UBI mode are carefully chosen for each node.


\bibliographystyle{plain}
\bibliography{trees}

\newpage

\appendix

\section{Comparison between a perfect binary tree and a perfect ternary tree}\label{comp}
Let $l\geq 2$ an integer.
Let $h_2$ the lowest integer such that $2^{h_2} \geq l$ and $h_3$ the lowest integer such that $3^{h_3} \geq l$.
We assume that we use a perfect binary (or ternary) tree as in the original Merkle (and Damg{\aa}rd) hash tree mode,
\emph{i.e.} the message is padded to obtain a message size which is a power of $2$ (or $3$).
The problem is to compare $2h_2$ and $3h_3$.

Any $l$ can be uniquely written
\[l=2^k+u,\]
where $u$ is an integer such that $0 \leq u <2^k$. Then
\[l= 2^k(1+a) \hbox{ where } a=\frac{u}{2^k}.\]
If $a=0$ then $h_2=k$ else $h_2=k+1$.

\subsection{The case $a=0$}
In this case 
\[ l=2^k, h_2=k,  
h_3=\left\lceil \frac{k\log(2)}{\log(3)}\right\rceil.\]
Then 
\[h_3=\frac{k\log(2)}{\log(3)}+\alpha,\]
where $0<\alpha <1$.
We must compare 
$3h_3$ with $2 h_2$, namely
\[3\frac{k\log(2)}{\log(3)}+3 \alpha \hbox{ with } 2k,\]
or
\[3 \frac{\log(2)}{\log(3)}+3 \frac{\alpha}{k} \hbox{ with } 2.\]
As $\alpha$ is bounded by $1$ and $3 \dfrac{\log(2)}{\log(3)} <2$, for $k$ sufficiently large
we have $3h_3<2h_2$.
More precisely if $k \geq 28$ then $3h_3<2h_2$, meaning that a perfect ternary tree gives a better running time than a perfect binary tree.
When $2\leq k \leq 27$, we compute the $27$ values 
\[T=\frac{3}{k}\left\lceil \frac{k\log(2)}{\log(3)}\right\rceil - 2\]
and we look at the sign of the result: 
\begin{itemize}
\item For $k=3 s$ ($s=1, \cdots,9$), a perfect binary tree and a perfect ternary tree give the same result ($T=0$).
\item For $k=11,14,17,19,20,22,23,25,26$,  a perfect ternary tree is better ($T<0$).
\item For $k=2,4,5,7,8,10,13,16$, a perfect binary tree is better ($T>0$).
\end{itemize}

\subsection{The case $a \neq 0$}
In this case $h_2=k+1$ and 
\[h_3= \left\lceil \frac{k\log(2)}{\log(3)}+\frac{\log(1+a)}{\log(3)}\right\rceil.\]
We must compare $3h_3$ to $2h_2$.
But:
\[\frac{3h_3}{k} \leq \frac{3\log(2)}{\log(3)}+\frac{3\log(2)}{k\log(3)} +\frac{3}{k}\]
and 
\[\frac{2h_2}{k}=2+\frac{2}{k}.\]
As $\dfrac{3\log(2)}{\log(3)}<2$, for $k$ sufficiently large
we have $3h_3<2h_2$.
More precisely for $k \geq 27$ then $3h_3 <2h_2$, meaning that a perfect ternary tree gives a better running time than a 
perfect binary tree. For any $2 \leq k \leq 26$ and any $u$
such that $1 \leq u < 2^k$ we must compute the sign of
\[R=3\left\lceil \frac{k\log(2)}{\log(3)}+\frac{\log\left(1+\frac{u}{2^k}\right)}{\log(3)}\right\rceil-2k-2.\]
As $R$ is an increasing function of $u$, it is sufficient to determine for any $k<27$
the value of $u$ where the sign changes. This can be done by dichotomy. Results are in Table~\ref{Compar_perfect_binary_ternary}.

\begin{table}
\[
\begin{array}{|c|c|}
\hline 
k=2 & Sign=0 \hbox{ for any } u \\
\hline
k=3 & Sign <0 \hbox{ for } u=1 \hbox{ and } Sign > 0 \hbox{ for } u >1\\
\hline
k=4 & Sign <0 \hbox{ for } u \leq 11 \hbox{ and } Sign > 0 \hbox{ for } u >11\\
\hline
k=5 & Sign=0 \hbox{ for any } u \\
\hline 
k=6 & Sign <0 \hbox{ for } u \leq 17 \hbox{ and } Sign > 0 \hbox{ for } u >17\\
\hline
k=7 & Sign <0 \hbox{ for } u \leq 115 \hbox{ and } Sign > 0 \hbox{ for } u >115\\
\hline
k=8 & Sign=0 \hbox{ for any } u \\ 
\hline
k=9 & Sign <0 \hbox{ for } u \leq 217 \hbox{ and } Sign > 0 \hbox{ for } u >217\\
\hline
k=10 & Sign<0 \hbox{ for any } u \\ 
\hline
k=11 & Sign <0 \hbox{ for } u \leq 139 \hbox{ and } Sign = 0 \hbox{ for } u >139\\
\hline
k=12 & Sign <0 \hbox{ for } u \leq 2465 \hbox{ and } Sign > 0 \hbox{ for } u >2465\\
\hline
k=13 & Sign<0 \hbox{ for any } u \\ 
\hline
k=14 & Sign <0 \hbox{ for } u \leq 3299 \hbox{ and } Sign = 0 \hbox{ for } u >3299\\
\hline
k=15 & Sign <0 \hbox{ for } u \leq 26281 \hbox{ and } Sign > 0 \hbox{ for } u >26281\\
\hline
k=16 & Sign<0 \hbox{ for any } u\\
\hline
k=17 & Sign <0 \hbox{ for } u \leq 46075 \hbox{ and } Sign = 0 \hbox{ for } u >46075\\
\hline
k=18 & Sign <0 \hbox{ for any } u\\
\hline
k=19 & Sign <0 \hbox{ for any } u\\
\hline
k=20 & Sign <0 \hbox{ for } u \leq 545747 \hbox{ and } Sign = 0 \hbox{ for } u >545747\\
\hline
k=21 & Sign <0 \hbox{ for any } u\\
\hline
k=22 & Sign <0 \hbox{ for any } u\\
\hline
k=23 & Sign <0 \hbox{ for } u \leq 5960299 \hbox{ and } Sign = 0 \hbox{ for } u >5960299\\
\hline
k=24 & Sign <0 \hbox{ for any } u\\
\hline
k=25 & Sign <0 \hbox{ for any } u\\
\hline
k=26 & Sign <0 \hbox{ for } u \leq 62031299 \hbox{ and } Sign = 0 \hbox{ for } u >62031299\\
\hline
\end{array}
\]
\caption{Comparison between a perfect binary tree and a perfect ternary tree. If $Sign<0$ a perfect ternary tree has a better running time. 
If $Sign=0$ the two trees give the same running time. Otherwise a perfect binary tree is better.}
\label{Compar_perfect_binary_ternary}
\end{table}

\section{Algorithms for reducing the number of processors}\label{red_number_processors}

\subsection{Reducing the number of processors at the base level}

We propose two (different) algorithms to construct an optimal tree (in the sense of the running time) which covers exactly $l$ blocks (the tree is not necessarily perfect)
and increases as much as possible the arity of the base level.
The first solution consists to check if there exists an optimal tree having a level of arity $5$ or $4$.~\\

\noindent\fbox{%
\begin{varwidth}{\dimexpr\linewidth-2\fboxsep-2\fboxrule\relax}
\begin{sloppypar}
\paragraph{\textbf{Algorithm 2a}} ~\\
INPUTS: a message length $l$ and a multiset of arities
(arranged in descending order) minimizing the running time, denoted  $A=\{x_1,x_2, ..., x_{|A|}\}$.~\\
OUTPUT: a multiset of arities (still sorted in descending order)
minimizing the number of processors while leaving unchanged the running time.~\\
Let $t_l$ the optimal running time for a message of size $l$, \emph{i.e.} the sum of arities of $A$.
The algorithm proceeds as follows: 
\begin{enumerate}
 \item Use Algorithm 1 to construct a tree for a message length $l'=\lceil l/5 \rceil$ and denote by $A'$ the corresponding ordered multiset of arities.
 If $t_l = t_{l'} + 5$ then return the multiset $A''=\{5,A'\}$, otherwise go to the following step.
 \item Use Algorithm 1 to construct a tree for a message length $l'=\lceil l/4 \rceil$ and denote by $A'$ the corresponding ordered multiset of arities.
 If $t_l = t_{l'} + 4$ then return the multiset $A''=\{4,A'\}$, otherwise go to the following step.
 \item Return $A$ (which cannot be further optimized).
\end{enumerate}
\end{sloppypar}
\end{varwidth}
}

\newpage

The second approach uses the following hints:~\\

\paragraph{\textbf{Hints}} Let us note that if $k>0$, then $a>b \iff (a-k)b>a(b-k)$. Moreover, if $b \leq a$ then $(b-1)(a+1) \le ab$. This suggests that
a product of several numbers, where the sum is constant, is maximized when these numbers are as close together as possible.
In order to decrease the product of arities as slowly as possible we 
use the fact that if $c \geq b \geq a$ 
we have
$(c+1)(b-1)a \geq (c+1)b(a-1)$.~\\

\noindent\fbox{%
\begin{varwidth}{\dimexpr\linewidth-2\fboxsep-2\fboxrule\relax}
\begin{sloppypar}
\paragraph{\textbf{Algorithm 2b}} ~\\
INPUTS: a message length $l$ and a multiset of arities
(arranged in descending order) minimizing the running time, denoted  $A=\{x_1,x_2, ..., x_{|A|}\}$.~\\
OUTPUT: a multiset of arities (still sorted in descending order)
minimizing the number of processors while leaving unchanged the running time.~\\
The algorithm proceeds as follows: 
\begin{enumerate}
 \item We start by replacing in $A$ each pair of arities $2$ by an arity $4$ (leaving possibly only one arity $2$ in $A$). 
 We sort $A$ in descending order.
 \item We repeat at most twice the following routine to determine the solution:
 \begin{itemize}
  \item Case $|A|=1$: we return $A$.
  \item Case $|A|=2$:
    \begin{itemize}
      \item Case $x_1=5$: we return $A$.
      \item Case $x_1 \geq 3,x_2 \geq 3$: if $(x_1+1)(x_2-1) \geq l$ then $A=\{x_1+1,x_2-1\}$, otherwise we return $A$.
      \item Case $x_1=4,x_2=2$: we return $A$.
      \item Case $x_1=3,x_2=2$: if $5 \geq l$ then $A=\{5\}$. We return $A$.
    \end{itemize}
  \item Case $|A| \geq 3$:
    \begin{itemize}
      \item Case $x_1=5$: we return $A$.
      \item Case $x_1 \geq 3,x_2 \geq 3,x_3 \geq 2$: if $(x_1+1)(x_2-1)\prod_{i=3}^{|A|}x_i \geq l$ then we perform the following operations: 
      ($i$)~we add $1$ to $x_1$ and we subtract $1$ to $x_2$; ($ii$)~we replace a possible pair of arities $2$ by an arity $4$; 
      ($iii$)~we reorder~$A$. If either the check fails or $x_1=5$ then we return $A$.
    \end{itemize}
 \end{itemize}
\end{enumerate}
\end{sloppypar}
\end{varwidth}
}

\newpage

\subsection{Reducing the number of processors at all the levels}

The following algorithm uses Algorithm 1 and 2 in order to compute a multiset of arities (sorted in descending order) 
minimizing the running time and the required number of processors at each step of the computation.~\\

\noindent\fbox{%
\begin{varwidth}{\dimexpr\linewidth-2\fboxsep-2\fboxrule\relax}
\paragraph{\textbf{Algorithm 3}} ~\\
INPUT: a message length $l$.~\\
OUTPUT: an ordered multiset of arities
minimizing the running time and the required number of processors at each step of the computation.~\\
Let $A_0=\{x_1,x_2, ..., x_{|A_0|}\}$ be the multiset of arities returned by Algorithm 1. 
We then use Algorithm 2 with a message of length $l$ and 
the multiset $A_0$ to compute the multiset of arities $A_1=\{x'_1,x'_2, ..., x'_{|A_1|}\}$. 
The rest of the algorithm proceeds iteratively as follows:
\begin{sloppypar}
\begin{enumerate}
 \item 
We apply Algorithm 2 on inputs $l'=\lceil l/x'_1 \rceil$ and $A'_1=\{x'_2, ..., x'_{|A_1|}\}$ to 
compute the multiset $A'_2=\{x''_2, ..., x''_{|A'_1|}\}$. We set $n=1$.
 \item As long as one of the following termination conditions is not met, namely
 $(i)$~$A^{(n)}_{n+1}=A^{(n)}_n$; ($ii$)~the highest number of levels of arity $4$ 
has been reached (see Lemma \ref{numb_ar_4_and_5}); or $(iii)$~$A^{(n)}_{n+1}=\varnothing$, 
we set $n=n+1$ and apply Algorithm 2 with the inputs $l^{(n)}=\left\lceil l^{(n-1)}/x^{(n)}_n \right\rceil$ and 
$A^{(n)}_n=\{x^{(n)}_{n+1}, \ldots., x^{(n)}_{|A^{(n-1)}_n|}\}$ to compute the multiset 
$A^{(n)}_{n+1}=\{x^{(n+1)}_{n+1}, \ldots, x^{(n+1)}_{|A^{(n)}_n|}\}$.
 
\end{enumerate}
The resulting multiset of arities $A_r=\{x'_1, x''_2, \ldots, x^{(n)}_n, x^{(n+1)}_{n+1}, \ldots, x^{(n+1)}_{|A^{(n)}_n|} \}$
minimizes the number of required processors at each step of the computation.
\end{sloppypar}
\end{varwidth}
}

\end{document}